\newtheorem{thm}{Theorem}
\newtheorem{lem}{Lemma}
\newtheorem{proposition}{Proposition}
\theoremstyle{definition}
\theoremstyle{remark}
\def\vec#1{{\bm{#1}}} 
\newcommand{\tr}{\operatorname{tr}}
\newcommand{\imply}{\mathrel{\Rightarrow}}
\newcommand{\poly}{\mathrm{poly}}
\newcommand{\rmd}{\mathrm{d}}
\newcommand{\caH}{\mathcal{H}}
\newcommand{\caV}{\mathcal{V}}
\newcommand{\sym}{\mathrm{sym}}
\newcommand{\scrA}{\mathscr{A}}
\newcommand{\scrM}{\mathscr{M}}
\newcommand{\lsp}{\hspace{0.1em}}
\newcommand{\be}{\begin{equation}}
\newcommand{\ee}{\end{equation}}
\newcommand{\ba}{\begin{align}}
\newcommand{\ea}{\end{align}}
\def\<{\langle}  
\def\>{\rangle}  
\newcommand{\ket}[1]{| #1\>}
\newcommand{\bra}[1]{\< #1|}
\def\eqref#1{\textup{(\ref{#1})}}  
\newcommand{\eref}[1]{Eq.~\textup{(\ref{#1})}}
\newcommand{\Eref}[1]{Equation~\textup{(\ref{#1})}}
\newcommand{\esref}[2]{Eqs.~\textup{(\ref{#1})} and \textup{(\ref{#2})}}
\newcommand{\Esref}[2]{Equations~\textup{(\ref{#1})} and \textup{(\ref{#2})}}
\newcommand{\fref}[1]{Fig.~\ref{#1}}
\newcommand{\sref}[1]{Sec.~\ref{#1}}
\newcommand{\Sref}[1]{Section~\ref{#1}}
\newcommand{\thref}[1]{Theorem~\ref{#1}}
\newcommand{\thsref}[1]{Theorems~\ref{#1}}
\newcommand{\lref}[1]{Lemma~\ref{#1}}
\newcommand{\lsref}[1]{Lemmas~\ref{#1}}
\newcommand{\pref}[1]{Proposition~\ref{#1}}
\newcommand{\cref}[1]{Conjecture~\ref{#1}}
\newcommand{\Cref}[1]{Conjecture~\ref{#1}}
\newcommand{\aref}[1]{Appendix~\ref{#1}}
\newcommand{\rcite}[1]{Ref.~\cite{#1}}
\newcommand{\rscite}[1]{Refs.~\cite{#1}}
\begin{document}
\title{Efficient Verification of Ground States of Frustration-Free Hamiltonians}

\author{Huangjun Zhu}
\email{zhuhuangjun@fudan.edu.cn}
\orcid{0000-0001-7257-0764}
\affiliation{State Key Laboratory of Surface Physics and Department of Physics, Fudan University, Shanghai 200433, China}
\affiliation{Institute for Nanoelectronic Devices and Quantum Computing, Fudan University, Shanghai 200433, China}
\affiliation{Center for Field Theory and Particle Physics, Fudan University, Shanghai 200433, China}

\author{Yunting Li}
\affiliation{State Key Laboratory of Surface Physics and Department of Physics, Fudan University, Shanghai 200433, China}
\affiliation{Institute for Nanoelectronic Devices and Quantum Computing, Fudan University, Shanghai 200433, China}
\affiliation{Center for Field Theory and Particle Physics, Fudan University, Shanghai 200433, China}

\author{Tianyi Chen}
\affiliation{State Key Laboratory of Surface Physics and Department of Physics, Fudan University, Shanghai 200433, China}
\affiliation{Institute for Nanoelectronic Devices and Quantum Computing, Fudan University, Shanghai 200433, China}
\affiliation{Center for Field Theory and Particle Physics, Fudan University, Shanghai 200433, China}


\begin{abstract}
Ground states of local Hamiltonians are of key interest in many-body physics and also in quantum information processing. Efficient verification of these states are crucial to many applications, but very challenging. Here we propose a simple, but powerful  recipe for verifying the ground states of general frustration-free Hamiltonians based on local measurements. Moreover, we  derive rigorous bounds on the sample complexity by virtue of the quantum detectability lemma (with improvement) and quantum union bound. Notably, the number of samples required does not increase with the system size when the underlying Hamiltonian is local and gapped, which is the case of most interest. As an application, we propose a general approach  for verifying Affleck-Kennedy-Lieb-Tasaki (AKLT) states on arbitrary graphs based on local spin measurements, which requires only a constant number of samples for AKLT states defined on various lattices. 
Our work is of interest not only to many tasks in quantum information processing,  but also to the study of many-body physics. 
\end{abstract}

\maketitle


\section{Introduction}
Multipartite entangled states play key roles in various quantum information processing tasks, including quantum computation, quantum simulation, quantum metrology, and quantum networking. An important class of multipartite states are the ground states of local Hamiltonians,
such as  Affleck-Kennedy-Lieb-Tasaki (AKLT) states \cite{AfflKLT87,AfflKLT88} and many tensor-network states
\cite{PereVWC08,CiraPSV21}.  They are of central interest in traditional condensed matter physics and also in the recent study of symmetry-protected   topological orders \cite{ChenGLW12, Sent15,ChiuTSA16,WeiRA22}. In addition, such states are particularly appealing to quantum information processing because they
can be prepared by cooling down  \cite{VersWC09} and adiabatic evolution \cite{FarhGGS00,FarhGGL01,AlbaL18, GeMC16,CruzBTS22} in addition to quantum circuits. Recently, they have found increasing applications in quantum computation and simulation \cite{StepWPW17, RausOWS19, StepNBE19,DaniAM20,GoihWET20,HangE23,BermHSR18}. Notably, AKLT states on various 2D lattices, including the honeycomb lattice, can realize universal  measurement-based quantum computation \cite{KaltLZB10,WeiAR11,Miya11,WeiAR12,Wei18,WeiRA22}.

To achieve success in quantum information processing, it is crucial to guarantee that the underlying multipartite quantum states satisfy desired requirements,  irrespective of whether they are prepared by quantum circuits or as ground states of local Hamiltonians. Unfortunately, traditional tomographic methods are notoriously resource consuming. To resolve this problem, many researchers have tried to find more efficient 
alternatives \cite{EiseHWR20,CarrEKK21,KlieR21,YuSG22,MorrSGD22,HangE23}. Recently, a powerful approach known as \emph{quantum state verification} (QSV) has attracted increasing attention \cite{HayaMT06,CramPFS10,AoliGKE15,LanyMHB17,HangKSE17, PallLM18,TakeM18,ZhuH19AdS,ZhuH19AdL,WuBGL21,LiuSHZ21,GocaSD22}. Efficient verification protocols based on local measurements have been constructed for many states of practical interest, including bipartite pure states \cite{HayaMT06,Haya09G,PallLM18,ZhuH19O,LiHZ19,WangH19,YuSG19}, stabilizer states  \cite{HayaM15,FujiH17,HayaH18, PallLM18,ZhuH19E,ZhuH19AdL,LiHZ20,MarkK20,LiZH23}, hypergraph states \cite{ZhuH19E}, weighted graph states \cite{HayaT19}, and Dicke states \cite{LiuYSZ19,LiHSS21}. The efficiency of this approach has been demonstrated in quite a few experiments \cite{ZhanZCP20,ZhanLYP20,LuXCC20,JianWQC20}.
Although several works have considered the verification of ground states of local Hamiltonians \cite{CramPFS10,LanyMHB17,HangKSE17,BermHSR18,GluzKEA18,TakeM18,CruzBTS22}, the sample costs of known protocols are still too prohibitive for large and intermediate quantum systems of practical interest, which consist of more than 100 qubits or qudits, even if the   Hamiltonians are frustration free.

In this work, we propose a general  recipe for verifying the ground states of  frustration-free Hamiltonians based on local measurements, which does not require explicit expressions for the ground states. Each  protocol is constructed from a matching cover or edge coloring of a  hypergraph encoding the action of the Hamiltonian and is thus very intuitive. Moreover, we derive a rigorous performance guarantee by virtue of the spectral gap of the underlying Hamiltonian and simple graph theoretic quantities, such as the \emph{degree} and \emph{chromatic index} (also known as edge chromatic number). For a local  Hamiltonian defined on a lattice, to verify the ground state within  infidelity $\epsilon$ and significance level $\delta$, the sample cost is only $O((\ln \delta^{-1})/(\gamma\epsilon))$, where $\gamma$ is the spectral gap  of the underlying Hamiltonian. Compared with previous protocols, the scaling behaviors are much better   with respect to the system size, spectral gap $\gamma$, and the precision as quantified by the infidelity $\epsilon$.  Notably, we can verify the ground state  with a constant sample cost that is independent of the system size when the spectral gap $\gamma$ is bounded from below by a positive constant.

For example, we can  verify AKLT states defined on arbitrary graphs, and the resource overhead is upper bounded by a constant that is independent of the system size for most AKLT states of practical interest, including those defined on various 1D and 2D lattices. When  $\epsilon=\delta=0.01$ and the number of nodes is around 100, our protocols are  tens of thousands of times more efficient than previous protocols, and the efficiency advantage becomes more and more significant as the target precision and system size increase. Additional details can be found in the companion paper~\cite{ChenLZ23} (see \aref{app:Companion}). 
 Our recipe is expected to find diverse applications in quantum information processing and many-body physics.

Technically, a frustration-free Hamiltonian is relatively easy to deal with because such a  Hamiltonian offers the  possibility of parallel verification of multiple local terms simultaneously.
High degree of parallel processing is crucial to achieving a high efficiency. 
Nevertheless, it is still  nontrivial to combine the local information as efficiently as possible
so as to draw an accurate conclusion about the ground state of the whole Hamiltonian, although the basic idea is quite intuitive. Our main contribution in this work is to
construct nearly optimal verification protocols and to derive nearly tight bounds on the sample complexity by fully exploiting the frustration-free property.

To establish our main results on the sample complexity, we first derive a stronger quantum detectability lemma, which improves over previous results \cite{AharALV09,AnshAV16}. This stronger detectability lemma can be used to derive tighter upper bounds on the operator norm of certain product of  projectors  tied to the projectors that compose the Hamiltonian; it is also of general interest that is beyond the focus of this paper. In addition, by virtue of a generalization of the quantum union bound \cite{Gao15, DonnV22}, we derive a simple, but nearly tight lower bound on the spectral gap  of the average of certain test projectors based on the operator norm of a product. Combining these technical tools we derive our main result \thref{thm:SpectralGap}, which clarifies the spectral gap and sample complexity of our verification protocol. Other main results (including \thsref{thm:SpectralGap2} and \ref{thm:homoDesignGen}), once formulated, can be proved by virtue of standard arguments widely used in quantum information theory. Familiarity with the concepts of spherical $t$-designs \cite{DelsGS77,Seid01,BannB09} and spin coherent states \cite{ZhanFG90} would be helpful to understanding the proof of \thref{thm:homoDesignGen}.

The rest of this paper is organized as follows. In \sref{sec:Pre} we first review the basic framework of quantum state verification and discuss its generalization to subspace verification. 
Then we introduce basic concepts on hypergraphs and frustration-free Hamiltonians that are relevant to the current study. In \sref{sec:DL} we prove a stronger detectability lemma and discuss its implications. In \sref{sec:EVGS} we propose a general approach for verifying the ground states of frustration-free Hamiltonians and determine the sample complexity. In \sref{sec:AKLT} we illustrate the power of this simple idea by constructing efficient protocols for verifying general AKLT states. \Sref{sec:sum} summarizes this paper.

\section{\label{sec:Pre}Preliminaries}
\subsection{Quantum state verification}
Consider a device that is supposed to produce the target state $|\Psi\rangle$ within the Hilbert space $\caH$, but actually produces the states $\sigma_1, \sigma_2, \ldots, \sigma_N$ in $N$ runs. Our task is to verify whether these states are sufficiently close to the target state on average, where the closeness is usually quantified by the fidelity. To this end, in each run we can perform a random test from a set of accessible tests. Each test is essentially a two-outcome measurement $\{T_l,1-T_l\}$
and is determined by the test operator $T_l$, which satisfies the condition $T_l|\Psi\>=|\Psi\>$, so that the target state $|\Psi\rangle$ can always pass the test \cite{PallLM18,ZhuH19AdS,ZhuH19AdL}.

Suppose the test $T_l$ is chosen with probability $p_l$; then the performance of the verification procedure is determined by the \emph{verification operator}  $\Omega=\sum_{l=1} p_l T_l$.
If $\sigma$ is a quantum state that satisfies  $\bra{\Psi}\sigma \ket{\Psi} \leq 1-\epsilon$, then the  probability that $\sigma$ can pass each test on average satisfies
\begin{equation}\label{eq:PassProb}
\max_{\<\Psi|\sigma|\Psi\>\leq 1-\epsilon }\!\tr(\Omega \sigma)=1- [1-\beta(\Omega)]\epsilon=1- \nu(\Omega)\epsilon,
\end{equation}
where $\beta(\Omega)$ is the second largest eigenvalue of $\Omega$, and  $\nu(\Omega)=1-\beta(\Omega)$ is the \emph{spectral gap} from the maximum eigenvalue.  
To verify the
target state within infidelity $\epsilon$ and significance level $\delta$ (assuming $0<\epsilon,\delta<1$), the minimum number of tests required reads \cite{PallLM18,ZhuH19AdS,ZhuH19AdL}
\begin{equation}\label{eq:TestNum}
N=\biggl\lceil\frac{ \ln \delta}{\ln[1-\nu(\Omega)\epsilon]}\biggr\rceil
\leq \biggl\lceil\frac{ \ln (\delta^{-1})}{\nu(\Omega)\epsilon}\biggr\rceil
\approx \frac{ \ln (\delta^{-1})}{\nu(\Omega)\epsilon},
\end{equation}
which is  inversely proportional to the spectral gap $\nu(\Omega)$. To optimize the performance, we need to maximize the spectral gap over  the accessible measurements. The verification operator $\Omega$ is \emph{homogeneous}  if it has the following form \cite{ZhuH19AdS,ZhuH19AdL}
\begin{align}\label{eq:Homo}
\Omega=|\Psi\>\<\Psi|+\lambda(1-|\Psi\>\<\Psi|),
\end{align}
where $0\leq \lambda<1$ is a real number. Such a verification strategy is 
also useful for fidelity estimation and is thus particularly appealing.

\subsection{Subspace verification}
Next, we generalize the idea of QSV to subspace verification, which is crucial to verifying ground states of local Hamiltonians. Previously, the idea of subspace verification was employed only in some special setting \cite{FujiH17}. 
Consider a device that is supposed to produce a quantum state supported in a subspace $\caV$ within the Hilbert space $\caH$, but may actually produce something different. To this end, in each run we can perform a random test from a set of accessible tests. Each test  is determined by a test operator $T_l$ as in QSV. Let $Q$ be the projector onto the subspace $\caV$. Then the condition
$T_l|\Psi\>=|\Psi\>$ in QSV should now be replaced by $T_l Q=Q$, 
so that every state supported in $\caV$ can always pass each test.  Suppose the test $T_l$ is performed with probability $p_l$; then the performance of the verification procedure is determined by the verification operator $\Omega=\sum_{l=1} p_l T_l$, which is analogous to the counterpart in QSV.  The verification operator $\Omega$ is \emph{homogeneous}  if it has the following form [cf. \eref{eq:Homo}]
\begin{align}
\Omega=Q+\lambda(1-Q),
\end{align}
where $0\leq \lambda<1$ is a real number.

Suppose the quantum state $\sigma$ produced has fidelity at most $1-\epsilon$, which means $\tr(Q\sigma)\leq 1-\epsilon$; then the maximal probability that $\sigma$ can pass each test on average reads
\begin{equation}\label{eq:PassProbSS}
\max_{\tr(Q\sigma)\leq 1-\epsilon }\tr(\Omega \sigma)=1- [1-\beta(\Omega)]\epsilon=1- \nu(\Omega)\epsilon,
\end{equation}
where 
\begin{align}
\beta(\Omega)=\|\bar{\Omega}\|,  \quad \bar{\Omega}=\Omega-Q=(1-Q)\Omega(1-Q),
\end{align}
and  $\nu(\Omega)=1-\beta(\Omega)$ is also called the spectral gap. The  number of tests required to verify the subspace $\caV$ within infidelity $\epsilon$ and significance level $\delta$ is still  given by \eref{eq:TestNum}, although the meaning of $\nu(\Omega)$ is a bit different now. \\

\subsection{Hypergraphs}
A hypergraph $G=(V,E)$ is specified by a set of vertices $V$ and a set of edges (hyperedges) $E$, where each edge is a nonempty subset of $V$ \cite{Volo09book,ZhuH19E}. 
An edge is a loop if it contains only one vertex. Two distinct  vertices of $G$ are neighbors or adjacent  if they belong to a same edge.  The degree of a vertex $j\in V$ is the number of its neighbors and is denoted by $\deg(j)$; the degree of $G$ is the maximum vertex degree and is denoted by $\Delta(G)$. The hypergraph $G$ is connected if 
for each pair of distinct vertices $i,j$, there exist a positive integer $h$ and vertices $i_1, i_2, \ldots,i_h$ with $i_1=i$ and $i_h=j$ such that $i_k, i_{k+1}$ are adjacent for $k=1, 2, \ldots, h-1$.

Two distinct edges of $G$ are neighbors or adjacent if their intersection is nonempty. 
A subset $M$ of $E$ is a \emph{matching} of $G$ if no two edges in $M$ are adjacent. A set $\scrM$ of matchings is a \emph{matching cover} if it
covers $E$, which means $\cup_{M\in \scrM} M=E$. It should be noted that, in some literature, a matching cover means a set of matchings that covers the vertex set, which is different from our definition. An \emph{edge coloring} of $G$ is an assignment of colors to its edges such that adjacent edges have different colors. The edge coloring is trivial if no two edges are assigned with the same color. Note that every edge coloring of $G$ determines a matching cover. 
Conversely, every matching cover composed of disjoint matchings determines an edge coloring. The \emph{chromatic index} (also known as edge chromatic number) of $G$ is the minimum number of colors  required to color the edges of $G$ and is denoted by $\chi'(G)$; it is also the minimum number of matchings required to cover the edge set $E$.

A (simple) graph is a special hypergraph in which each edge contains  two vertices. According to  Vizing's theorem \cite{Vizi64,Volo09book}, the chromatic index of a graph $G(V,E)$ satisfies 
\begin{align}
\Delta(G)\leq \chi'(G)\leq \Delta(G)+1. 
\end{align}
In general,  it is computationally very demanding to find an optimal edge coloring,  but it is easy to construct a nearly optimal edge coloring  with $\Delta(G)+1\leq \chi'(G)+1$ colors~\cite{MisrG92}.\\

\subsection{Frustration-free Hamiltonians}
Since we are mainly interested in the ground states, without loss of generality, we can assume that the Hamiltonian $H$ is a sum of projectors, which share a common null vector. These projectors can be labeled by the edges (hyperedges) of a hypergraph $G=(V,E)$   \cite{Volo09book,ZhuH19E}, and $H$ can be expressed as 
\begin{align}
H=\sum_{e\in E} P_e, \label{eq:Hamiltonian}
\end{align}
where the projector $P_e$ acts (nontrivially) only on the nodes associated with the vertices contained in $e$.
Given that $H$ is frustration free by assumption, 
a state $|\Phi\>$ is a ground state iff $P_e|\Phi\>=0$ for all $e\in E$, so the ground state energy is 0. The \emph{spectral gap} of $H$ is the smallest nonzero eigenvalue and is denoted by $\gamma=\gamma(H)$ (note the distinction from the spectral gap of a verification operator). The Hamiltonian $H$ is $k$-local if each projector $P_e$ acts on at most $k$ nodes, in which case each edge of $G$ contains at most $k$ vertices.   Let $g=g(H)$ be the smallest integer $j$ such that each projector $P_e$ commutes with all other projectors $P_{e'}$ except for $j$ of them.\\

\section{\label{sec:DL}A stronger detectability lemma}
The detectability lemma  proved in \rcite{AharALV09} and improved in \rcite{AnshAV16} is a powerful tool for understanding the properties of frustration-free Hamiltonians, including the spectral gaps in particular. 
Here we shall derive a stronger version of the detectability lemma and discuss its implications. 
This  result will be very useful to deriving tighter bounds on the sample cost of our verification protocols for the ground states, which is our original motivation. We believe that it is also of independent interest to many researchers in the quantum information community.

\subsection{Improvement of the detectability lemma}

The following improvement of the detectability lemma and its corollary \lref{lem:DLnorm} are proved in \sref{sec:lem:DLproof}. 
\begin{lem}\label{lem:DL}
Let $\{P_k\}_{k=1}^q$ be a set of projectors on a given Hilbert space $\caH$, $Q_k=1-P_k$, and $H=\sum_k P_k$. Let $|\psi\>$ be any normalized ket in $\caH$, $|\varphi\>=Q_1Q_2\cdots Q_q|\psi\>$, and $\varepsilon_\varphi =\<\varphi|H|\varphi\>/\|\varphi\|^2$ with $\|\varphi\|^2=\||\varphi\>\|^2=\<\varphi|\varphi\>$  (assuming $\|\varphi\|>0$).  Then 
	\begin{align}\label{eq:DL}
	\|\varphi\|^2\!\leq\! \frac{\zeta}{\varepsilon_\varphi+\zeta}\!\leq\! \frac{s^2\tilde{g}}{\varepsilon_\varphi+s^2\tilde{g}}\!\leq\! \frac{s^2g^2}{\varepsilon_\varphi+s^2g^2}
	\!\leq\! \frac{g^2}{\varepsilon_\varphi+g^2}
	\end{align}
	with $\zeta=\max_{j} \zeta_j$ and $s=\max_{j<k}s_{jk}$, where $s_{jk}$ is the largest singular value of $P_jP_k$ that is not equal to~1 ($s_{jk}=0$ if all singular values of $P_jP_k$ are equal to 1), and
	\begin{gather}
	\zeta_j=\sum_{k|j\in \scrA_k} g_k s_{jk}^2,\quad \tilde{g}=\max_{j} \sum_{k|j\in \scrA_k} g_k,\\
	\scrA_k=\{j|j<k, P_j P_k\neq P_k P_j\}, \;\; g_k=|\scrA_k|. \label{eq:scrAkgk}
	\end{gather}
\end{lem}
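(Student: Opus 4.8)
The plan is to isolate the one substantive inequality in \eqref{eq:DL} and reduce it to an energy-versus-loss estimate. The last three inequalities are routine: since $t\mapsto t/(\varepsilon_\varphi+t)$ is increasing for $t\ge 0$, they follow from the numerical chain $\zeta\le s^2\tilde g\le s^2 g^2\le g^2$, which is immediate from the definitions. Indeed $s_{jk}\le s\le 1$ gives $\zeta_j\le s^2\sum_{k\mid j\in\scrA_k}g_k$ and hence $\zeta\le s^2\tilde g$, while $g_k=|\scrA_k|\le g$ together with $|\{k\mid j\in\scrA_k\}|\le g$ gives $\tilde g\le g^2$, and finally $s\le 1$. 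So everything rests on the first inequality $\|\varphi\|^2\le\zeta/(\varepsilon_\varphi+\zeta)$. Clearing denominators and using $\varepsilon_\varphi\|\varphi\|^2=\<\varphi|H|\varphi\>$, this is equivalent to
\begin{equation}\label{eq:plan-EL}
\<\varphi|H|\varphi\>\le\zeta\,(1-\|\varphi\|^2),
\end{equation}
which I would establish for the given $|\psi\>$ (indeed for every $|\psi\>$).

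To handle \eqref{eq:plan-EL} I would first rewrite both sides through the partial products $|\varphi_m\>=Q_mQ_{m+1}\cdots Q_q|\psi\>$, so that $|\varphi_1\>=|\varphi\>$ and $|\varphi_m\>=Q_m|\varphi_{m+1}\>$. Because each $Q_m$ is a projector, $\|\varphi_m\|^2=\|\varphi_{m+1}\|^2-\|P_m|\varphi_{m+1}\>\|^2$, and telescoping this relation expresses the right-hand side as a sum of losses, $1-\|\varphi\|^2=\sum_m\|P_m|\varphi_{m+1}\>\|^2$. The left-hand side is simply $\<\varphi|H|\varphi\>=\sum_k\|P_k|\varphi\>\|^2$. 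The target \eqref{eq:plan-EL} thus becomes
\begin{equation}\label{eq:plan-goal}
\sum_k\|P_k|\varphi\>\|^2\le\zeta\sum_j\|P_j|\varphi_{j+1}\>\|^2,
\end{equation}
i.e. each output-energy term $\|P_k|\varphi\>\|$ must be controlled by the losses incurred at the earlier, non-commuting steps.

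The engine is a commutator expansion. Using $P_k|\varphi_k\>=0$ together with the fact that $P_k$ commutes with $Q_j$ for $j\notin\scrA_k$, the product rule for $[P_k,Q_1\cdots Q_{k-1}]$ collapses to
\begin{equation}\label{eq:plan-expand}
P_k|\varphi\>=-\sum_{j\in\scrA_k}Q_1\cdots Q_{j-1}\,[P_k,P_j]\,|\varphi_{j+1}\>.
\end{equation}
The geometric input is the two-projector (Jordan/CS) decomposition of $P_j$ and $P_k$: their common range is exactly the subspace on which the singular values of $P_jP_k$ equal $1$, and on its orthogonal complement one has the operator bounds $\|Q_jP_kP_j\|\le s_{jk}$ and $\|P_jP_kQ_j\|\le s_{jk}$. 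Granting the per-term estimate $\|Q_1\cdots Q_{j-1}[P_k,P_j]|\varphi_{j+1}\>\|\le s_{jk}\|P_j|\varphi_{j+1}\>\|$, the triangle inequality gives $\|P_k|\varphi\>\|\le\sum_{j\in\scrA_k}s_{jk}\|P_j|\varphi_{j+1}\>\|$; a Cauchy--Schwarz over the $g_k=|\scrA_k|$ surviving indices then yields $\|P_k|\varphi\>\|^2\le g_k\sum_{j\in\scrA_k}s_{jk}^2\|P_j|\varphi_{j+1}\>\|^2$, and summing over $k$ while exchanging the order of summation produces $\sum_j\zeta_j\|P_j|\varphi_{j+1}\>\|^2\le\zeta(1-\|\varphi\|^2)$, which is \eqref{eq:plan-goal}.

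The hard part will be the per-term estimate itself, which is where the improvement over the existing detectability lemma must come from. Decomposing $|\varphi_{j+1}\>=|\varphi_j\>+P_j|\varphi_{j+1}\>$ in \eqref{eq:plan-expand} splits each term into a benign contribution---$Q_jP_kP_j$ acting on the loss vector $P_j|\varphi_{j+1}\>$, safely bounded by $s_{jk}\|P_j|\varphi_{j+1}\>\|$---and a dangerous contribution proportional to $P_jP_k|\varphi_j\>$ with $|\varphi_j\>\in\ker P_j$, which a crude estimate controls only by $s_{jk}\|\varphi_j\|$, the full norm rather than a loss. The resolution must exploit that $|\varphi_{j+1}\>$ already carries the factor $Q_k$ (as $k>j$), so its weight on the common range of $P_j$ and $P_k$---precisely the subspace to which $[P_k,P_j]$ is blind---is suppressed; equivalently, I must track how the intervening projectors $Q_{j+1},\dots,Q_{k-1}$ regenerate $P_k|\varphi_j\>$. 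I expect to close this either by an induction on $k$ or by regrouping the telescoping differences $u_j-u_{j+1}$ with $u_j=Q_1\cdots Q_{j-1}P_k|\varphi_j\>$, so that the dangerous pieces cancel against neighbouring terms while the weight $s_{jk}$ survives intact. Keeping that weight exactly---rather than a cruder bound like $1$---is what sharpens the constant to $\zeta$ and drives the tighter sample-complexity bounds downstream.
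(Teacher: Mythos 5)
Your reduction of \eref{eq:DL} to the single inequality $\<\varphi|H|\varphi\>\leq\zeta\,(1-\|\varphi\|^2)$, the telescoping identity for $1-\|\varphi\|^2$ in terms of the losses $\|P_j|\varphi_{j+1}\>\|^2$, the Cauchy--Schwarz step over the $g_k$ indices in $\scrA_k$, and the closing chain $\zeta\leq s^2\tilde{g}\leq s^2g^2\leq g^2$ all match the paper's argument. The gap is exactly where you flag it, and it is not closable in the form you propose. The per-term estimate $\|Q_1\cdots Q_{j-1}[P_k,P_j]|\varphi_{j+1}\>\|\leq s_{jk}\|P_j|\varphi_{j+1}\>\|$ is false in general: in a Jordan block of the pair $(P_j,P_k)$ with cosine $c<1$, a unit vector $v$ lying in $\ker P_j$ but overlapping the range of $P_k$ satisfies $\|[P_k,P_j]v\|=c\sqrt{1-c^2}>0$ while $\|P_jv\|=0$. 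This is precisely your ``dangerous'' piece $P_jP_k|\varphi_j\>$, which is controlled only by $s_{jk}\|\varphi_j\|$ and not by any loss; nothing in the structure of $|\varphi_{j+1}\>$ obviously suppresses that component (the factor $Q_k$ it carries is separated from the $(P_j,P_k)$ geometry by the intervening $Q_{j+1},\ldots,Q_{k-1}$), and the telescoping increments at different $j$ live in the ranges of $Q_1\cdots Q_{j-1}P_j$, which are not orthogonal and admit no evident cancellation. Neither of your proposed rescues (induction on $k$, or pairwise cancellation) is substantiated.

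The paper's resolution is structurally different and avoids the per-term bound altogether. Instead of bounding the vector increment $u_j-u_{j+1}$, it bounds the scalar $\|P_k|\varphi_j\>\|$ via the recursion $\|P_k|\varphi_j\>\|=\|P_k(1-P_j)|\varphi_{j+1}\>\|\leq\|P_k|\varphi_{j+1}\>\|+s_{jk}\|P_j|\varphi_{j+1}\>\|$, which is \lref{lem:PQUB} (proved block by block with the same CS decomposition you invoke). The crucial feature is that the right-hand side \emph{retains} the term $\|P_k|\varphi_{j+1}\>\|$ --- exactly the dangerous contribution you cannot bound by a loss --- and simply passes it down the recursion, which terminates harmlessly because $P_k|\varphi_k\>=P_kQ_k|\varphi_{k+1}\>=0$; commuting indices contribute nothing since then $\|P_k|\varphi_j\>\|=\|Q_jP_k|\varphi_{j+1}\>\|\leq\|P_k|\varphi_{j+1}\>\|$. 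Replacing your per-term estimate by \eref{eq:PQUB} and iterating yields $\|P_k|\varphi\>\|\leq\sum_{j\in\scrA_k}s_{jk}\|P_j|\varphi_{j+1}\>\|$, after which the remainder of your argument goes through verbatim.
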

Here the last upper bound in \eref{eq:DL} was derived in \rcite{AnshAV16}, while the first three bounds improve over the original result.

To appreciate the implications of \lref{lem:DL},
suppose the Hamiltonian $H$ in \lref{lem:DL} is frustration free, and $|\psi\>$  is orthogonal to the ground state space. Then $|\varphi\>=Q_1Q_2\cdots Q_q|\psi\>$ is also orthogonal to the ground state space, which means $\varepsilon_\varphi\geq \gamma= \gamma(H)$ and 
\begin{align}\label{eq:DLO}
&\|(1-P_1)(1-P_2)\cdots (1-P_q)|\psi\>\|^2\nonumber\\
&=\|Q_1Q_2\cdots Q_q|\psi\>\|^2=\|\varphi\|^2 \leq \frac{\zeta}{\gamma+\zeta}\nonumber\\
&\leq \frac{s^2\tilde{g}}{\gamma+s^2\tilde{g}}\leq \frac{s^2g^2}{\gamma+s^2g^2}
\leq \frac{g^2}{\gamma+g^2}.
\end{align}
Here  the last upper bound  was derived in \rcite{AnshAV16}. Our improvement of the detectability lemma presented in \lref{lem:DL} is crucial to deriving the first three upper bounds, which in turn are crucial to deriving \lref{lem:DLnorm}  and \thref{thm:SpectralGap} below. This improvement can sometimes significantly reduce the upper bound on the number of tests required to verify the ground state of a frustration-free Hamiltonian, as we shall see in  \sref{sec:EVGS}.

\begin{lem}\label{lem:DLnorm}
	Suppose the Hamiltonian $H$ in \lref{lem:DL} is frustration free; let $Q_0$ be the projector onto the ground state space of $H$ and 
	\begin{align}
	\!\! \bar{Q}_k=(1-Q_0)Q_k(1-Q_0),\;\; k=1,2,\ldots, q.
	\end{align}
	Then 
	\begin{align}\label{eq:DLnorm}
	\|\bar{Q}_1 \bar{Q}_2\cdots \bar{Q}_q\|^2&\leq \frac{\zeta}{\gamma+\zeta}\leq \frac{s^2\tilde{g}}{\gamma+s^2\tilde{g}}\leq \frac{s^2g^2}{\gamma+s^2g^2}\nonumber\\
	&
	\leq \frac{g^2}{\gamma+g^2}.
	\end{align}
\end{lem}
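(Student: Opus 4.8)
The plan is to reduce the statement directly to \eref{eq:DLO}, which is already a consequence of \lref{lem:DL}. The essential observation is that the sandwiching projectors $1-Q_0$ in each $\bar{Q}_k$ can be collapsed, so that $\bar{Q}_1\bar{Q}_2\cdots\bar{Q}_q$ is nothing but the detectability operator $Q_1Q_2\cdots Q_q$ restricted to the orthogonal complement of the ground space, for which \eref{eq:DLO} supplies exactly the desired chain of bounds.

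First I would record the algebraic fact that makes everything work: because $H$ is frustration free, every ground state $|\Phi\>$ satisfies $P_k|\Phi\>=0$, hence $Q_k|\Phi\>=|\Phi\>$, so the range of $Q_0$ lies in the common $+1$ eigenspace of all the $Q_k$. This gives $Q_kQ_0=Q_0$ for every $k$, and taking adjoints (both operators being Hermitian) $Q_0Q_k=Q_0$ as well. In particular $Q_0$ commutes with each $Q_k$, and therefore so does $1-Q_0$; consequently $1-Q_0$ also commutes with the product $D:=Q_1Q_2\cdots Q_q$.

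Using $(1-Q_0)^2=1-Q_0$ together with this commutativity, I would then collapse the product. Since $\bar{Q}_k=(1-Q_0)Q_k(1-Q_0)$ and every interior factor $1-Q_0$ commutes past the $Q_k$, all of them coalesce into a single projector, giving $\bar{Q}_1\bar{Q}_2\cdots\bar{Q}_q=(1-Q_0)Q_1Q_2\cdots Q_q=D(1-Q_0)$. Taking operator norms, $\|\bar{Q}_1\cdots\bar{Q}_q\|=\|D(1-Q_0)\|$, and this norm equals the supremum of $\|D|\psi\>\|$ over unit vectors $|\psi\>$ with $Q_0|\psi\>=0$, because $1-Q_0$ projects precisely onto the orthogonal complement of the ground space.

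Finally I would invoke \eref{eq:DLO}: for any normalized $|\psi\>$ orthogonal to the ground space, the vector $|\varphi\>=Q_1Q_2\cdots Q_q|\psi\>=D|\psi\>$ obeys $\|\varphi\|^2\le \zeta/(\gamma+\zeta)\le s^2\tilde{g}/(\gamma+s^2\tilde{g})\le s^2g^2/(\gamma+s^2g^2)\le g^2/(\gamma+g^2)$. Since these right-hand sides are independent of $|\psi\>$, taking the supremum over such $|\psi\>$ yields $\|D(1-Q_0)\|^2=\|\bar{Q}_1\cdots\bar{Q}_q\|^2$ on the left and the claimed bounds on the right, which is \eref{eq:DLnorm}. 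The only step requiring any care is the collapse of the interior projectors, and it rests entirely on the commutation relation $Q_0Q_k=Q_kQ_0=Q_0$; once that is in hand the lemma is an immediate corollary of \lref{lem:DL}, so I do not anticipate a genuine obstacle.
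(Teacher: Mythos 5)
Your proof is correct and follows essentially the same route as the paper: both arguments rest on the fact that frustration-freeness gives $Q_kQ_0=Q_0Q_k=Q_0$, so the interior factors $1-Q_0$ collapse and the bound reduces to \eref{eq:DLO} applied to vectors orthogonal to the ground space. The only cosmetic difference is that you phrase the final step as a supremum over the orthocomplement while the paper keeps the outer $1-Q_0$ and uses $\|(1-Q_0)|\psi\>\|\leq 1$; these are the same estimate.
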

The first two upper bounds in \eref{eq:DLnorm} may depend on the order of the projectors $\bar{Q}_k$ in the product [cf.~\eref{eq:DL}], while the last two upper bounds are independent of this order.

\subsection{\label{sec:lem:DLproof}Proofs of \lsref{lem:DL} and \ref{lem:DLnorm}}
\begin{proof}[Proof of \lref{lem:DL}]
	To start with, we shall derive an upper bound for $\<\varphi|P_k|\varphi \>=\|P_k|\varphi \>\|^2$. Following \rcite{AnshAV16}, to derive an upper bound for
	\begin{align}
	\|P_k|\varphi \>\|=\|P_k(1-P_1)(1-P_2)\cdots (1-P_q)|\psi \>\|,
	\end{align}
	we can move the projector $P_k$ to the right until it is annihilated by $1-P_k$. Only those terms that do not commute with $P_k$ will contribute to the upper bound. 

For $j=1,2,\ldots, q$ let 
\begin{align}
|\varphi_{j}\>=(1-P_{j})(1-P_{j+1})\cdots (1-P_q)|\psi\>; \label{eq:varphij} 
\end{align}
then $|\varphi_1\>=|\varphi\>$.
By virtue of  \lref{lem:PQUB} below we can deduce that	
\begin{align}
\|P_k|\varphi_j\>\|&=\|P_k(1-P_j)|\varphi_{j+1}\>\|\nonumber\\
&\leq \|P_k|\varphi_{j+1}\>\|
 +s_{jk}\|P_j|\varphi_{j+1}\>\|,
\end{align}
where $s_{jk}$ is the largest singular value of $P_jP_k$ that is not equal to 1 (note that $s_{jk}=s_{kj}$). So
	\begin{align}
	\!\!\|P_k|\varphi \>\|\leq \sum_{j\in \scrA_k} s_{jk}\|P_j|\varphi_{j+1}\>\|,
	\end{align}	
	where $\scrA_k$  is defined in \eref{eq:scrAkgk}
and denotes the set of indices of the projectors $P_1, P_2, \ldots,P_{k-1}$ that do not commute with $P_k$. As a corollary, 
	\begin{align}
	\<\varphi|P_k|\varphi \>\leq g_k\sum_{j\in \scrA_k} s_{jk}^2\|P_j|\varphi_{j+1}\>\|^2, \label{eq:DLproof}
	\end{align}
	where $g_k=|\scrA_k|$ is the cardinality of $\scrA_k$.

	Next, summing over $k$ in \eref{eq:DLproof} yields
	\begin{align}
	&\<\varphi|H|\varphi\>=\sum_k \<\varphi|P_k|\varphi\>\nonumber\\
	&\leq \sum_k g_k\sum_{j\in \scrA_k} s_{jk}^2\|P_j|\varphi_{j+1}\>\|^2\nonumber\\
	&=\sum_{j=1}^{q-1} \zeta_j\|P_j|\varphi_{j+1}\>\|^2\leq \zeta\sum_{j=1}^{q-1} \|P_j|\varphi_{j+1}\>\|^2\nonumber\\
	&=\zeta\bigl[\|\varphi_q\|^2-\|\varphi_1\|^2\bigr]\leq \zeta\bigl(1-\|\varphi\|^2\bigr),
	\end{align}	
	which implies the first inequality in  \eref{eq:DL}. Here the last equality follows from the relation $|\varphi_j\>=(1-P_j)|\varphi_{j+1}\>$ 
	and the identity
	\begin{align}
	\|P_j|\varphi_{j+1}\>\|^2+\|(1-P_j)|\varphi_{j+1}\>\|^2=\|\varphi_{j+1}\|^2. 
	\end{align}	
	The rest inequalities in \eref{eq:DL} are  simple corollaries of the  facts below,
	\begin{gather}
	\!\!	\zeta_j=\sum_{k|j\in \scrA_k} g_k s_{jk}^2\leq s^2\sum_{k|j\in \scrA_k} g_k\leq s^2 \tilde{g}\quad \forall j,\\
	\tilde{g} =\max_j\sum_{k|j\in \scrA_k}  g_k\leq  g^2,\\
	0\leq s<1.
	\end{gather}
\end{proof}

The following technical lemma employed in the proof of \lref{lem:DL} is proved in \aref{app:lem:PQUBproof}
\begin{lem}\label{lem:PQUB}
	Suppose $P$ and $Q$ are two projectors on $\caH$ and $|\psi\>\in \caH$. Then 
	\begin{align}
	\|P(1-Q)|\psi\> \|\leq \|P|\psi\>\|+s \|Q|\psi\>\| , \label{eq:PQUB}
	\end{align}
	where $s$ is the largest singular value of $PQ$ that is not equal to 1 ($s=0$ if all singular values of $PQ$ are equal to 1). 
\end{lem}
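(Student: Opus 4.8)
The plan is to reduce \eqref{eq:PQUB} to the triangle inequality, but first to quarantine the one subspace on which the naive bound fails. A direct application of the triangle inequality gives $\|P(1-Q)|\psi\rangle\|\le \|P|\psi\rangle\|+\|PQ|\psi\rangle\|$, so it would suffice to prove $\|PQ|\psi\rangle\|\le s\|Q|\psi\rangle\|$. This last estimate is false in general: on the common range $\caK=\operatorname{ran}P\cap\operatorname{ran}Q$ the operator $PQ$ acts as the identity, i.e. it has singular value $1$ there, which is precisely the value excluded in the definition of $s$. The key observation is that the two obstacles cancel: $PQ$ attains the singular value $1$ exactly on $\caK$, whereas $P(1-Q)$ vanishes on $\caK$ because $1-Q$ annihilates $\operatorname{ran}Q\supseteq\caK$. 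I would therefore split off $\caK$ and run the triangle inequality only on its complement.

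Concretely, I would first establish the characterization: for a unit vector $|v\rangle$, one has $\|PQ|v\rangle\|=1$ iff $|v\rangle\in\caK$. The "if" direction is immediate. For "only if", the chain $\|PQ|v\rangle\|\le\|Q|v\rangle\|\le1$ forces both inequalities to be equalities, whence $|v\rangle\in\operatorname{ran}Q$, so $Q|v\rangle=|v\rangle$, and then $\|P|v\rangle\|=1$ gives $|v\rangle\in\operatorname{ran}P$. Letting $R$ be the projector onto $\caK$, the inclusion $\caK\subseteq\operatorname{ran}P\cap\operatorname{ran}Q$ shows that $R$ commutes with both $P$ and $Q$; hence $\caK$ and $\caK^\perp$ are invariant under $P$, $Q$, and $PQ$, and the characterization above yields $\|PQ|_{\caK^\perp}\|=s$.

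The main computation then proceeds as follows. Writing $|\psi'\rangle=(1-R)|\psi\rangle$, I would note $P(1-Q)|\psi\rangle=P(1-Q)|\psi'\rangle$ since $(1-Q)R=0$, and apply the triangle inequality to get $\|P(1-Q)|\psi'\rangle\|\le\|P|\psi'\rangle\|+\|PQ|\psi'\rangle\|$. For the second term I would set $|\phi\rangle=Q|\psi'\rangle\in\operatorname{ran}Q\cap\caK^\perp$ and use $PQ|\psi'\rangle=P|\phi\rangle=PQ|\phi\rangle$ together with $\|PQ|_{\caK^\perp}\|=s$ to obtain $\|PQ|\psi'\rangle\|\le s\|\phi\rangle\|=s\|Q|\psi'\rangle\|$. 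Finally, since $R$ commutes with $P$ and $Q$, the Pythagorean identities $\|P|\psi\rangle\|^2=\|PR|\psi\rangle\|^2+\|P|\psi'\rangle\|^2$ and its analogue for $Q$ give $\|P|\psi'\rangle\|\le\|P|\psi\rangle\|$ and $\|Q|\psi'\rangle\|\le\|Q|\psi\rangle\|$, which combine to yield \eqref{eq:PQUB}.

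The step I expect to demand the most care is the clean identification of the singular-value-$1$ subspace of $PQ$ with the common range $\caK$, together with the resulting commutation of $R$ with $P$ and $Q$: everything downstream (the invariance of $\caK^\perp$, the norm bound $\|PQ|_{\caK^\perp}\|=s$, and the Pythagorean splitting) rests on it. A more pictorial alternative would be to invoke the two-projector (Jordan-block) decomposition of $\caH$ into joint $P,Q$-invariant subspaces of dimension one and two and to read off all these facts blockwise; I would keep this as a fallback, though I expect the coordinate-free argument above to be shorter and to make the cancellation between $PQ$ and $P(1-Q)$ on $\caK$ most transparent.
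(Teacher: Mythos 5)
Your proof is correct, and it takes a genuinely different route from the paper's. The paper proves \lref{lem:PQUB} by an explicit two-projector (CS/Jordan) decomposition: it constructs bases $\{|\alpha_j\>\}$, $\{|\beta_k\>\}$ of the ranges of $P$ and $Q$ with $\<\alpha_j|\beta_k\>=\tilde{s}_j\delta_{jk}$, splits $\caH$ into mutually orthogonal two-dimensional blocks plus a remainder, verifies the inequality within a single block by a Bloch-sphere trigonometric computation, and recombines the blocks via Cauchy--Schwarz --- i.e., it carries out in detail exactly the "fallback" you sketch at the end. Your primary argument instead works coordinate-free: you isolate $\caK=\operatorname{ran}P\cap\operatorname{ran}Q$ as precisely the locus where $PQ$ attains the excluded singular value $1$ and where $P(1-Q)$ simultaneously vanishes, observe that the projector $R$ onto $\caK$ commutes with $P$ and $Q$ (since $PR=R=RP$ and likewise for $Q$), and then on $\caK^\perp$ the plain triangle inequality $\|P(1-Q)\psi'\|\leq\|P\psi'\|+\|PQ\psi'\|$ combined with $\|PQ|_{\caK^\perp}\|=s$ and the Pythagorean monotonicity $\|P\psi'\|\leq\|P\psi\|$, $\|Q\psi'\|\leq\|Q\psi\|$ finishes the job. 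All the individual steps check out (the characterization $\|PQ v\|=1\Leftrightarrow v\in\caK$ for unit $v$, the block-diagonality of $PQ$ with respect to $\caK\oplus\caK^\perp$, and the identification of $s$ with the norm of the restriction, including the degenerate conventions for $s=0$). Your route is shorter, dispenses with both the case analysis and the trigonometry, and makes transparent why excluding the singular value $1$ from the definition of $s$ costs nothing; the paper's blockwise route is more explicit and in principle retains the per-block constants $\tilde{s}_j$, though its final bound only uses $s=\max_j s_j$ anyway, so nothing is lost by your approach.
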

\Eref{eq:PQUB} holds  even if $|\psi\>$ is not normalized.

\begin{proof}[Proof of \lref{lem:DLnorm}]
	By assumption  $Q_0$ commutes with all $P_k$ and   $Q_k$ for $1\leq k\leq q$. Let $|\psi\>$ be any normalized ket in the Hilbert space under consideration; then $(1-Q_0)|\psi\>$ is orthogonal to the ground state space. Therefore,	
	\begin{align}
	&\|\bar{Q}_1 \bar{Q}_2\cdots \bar{Q}_q |\psi\>\|^2\nonumber\\
	&=\|(1-Q_0)Q_1 Q_2\cdots Q_q (1-Q_0)|\psi\>\|^2\nonumber\\
	&\leq \|Q_1 Q_2\cdots Q_q (1-Q_0)|\psi\>\|^2
	\nonumber\\
	&\leq \frac{\zeta}{\gamma+\zeta}\|(1-Q_0)|\psi\>\|^2\leq \frac{\zeta}{\gamma+\zeta}\nonumber\\
	&\leq \frac{s^2\tilde{g}}{\gamma+s^2\tilde{g}}\leq \frac{s^2g^2}{\gamma+s^2g^2}\leq \frac{g^2}{\gamma+g^2},
	\end{align}
	which implies \eref{eq:DLnorm}. Here the second inequality follows from  \eref{eq:DLO}. 
\end{proof}

\section{\label{sec:EVGS}Efficient verification of ground states}
\subsection{Matching and coloring protocols}
Suppose the Hamiltonian in \eref{eq:Hamiltonian} has a nondegenerate ground state denoted by $|\Psi_H\>$ (the nondegeneracy assumption is included to simplify the description and is not crucial). Let $Q_e=1-P_e$; then  $P_e|\Psi_H\>=0$ and $Q_e|\Psi_H\>=|\Psi_H\>$ for all $e\in E$. To verify the ground state, we need to verify that the state is supported in the range of $Q_e$ for each $e\in G(V,E)$, which can be realized by  subspace verification. A verification protocol (operator) for an edge $e$ is referred to as a \emph{bond  verification protocol} (operator). Since $P_e$ and $Q_e$ for each edge $e$ only act  on a few nodes, it is much easier to construct bond verification protocols than protocols for  the  ground state. 
Here we  provide a general recipe for constructing verification protocols for the ground state given 
a bond verification protocol with verification operator $\Omega_e$ for each edge $e$. Note that the operator $\Omega_e$ should satisfy the conditions $Q_e\leq \Omega_e\leq 1$ and  $\Omega_e Q_e =Q_e$. Since $Q_e$ is a projector, it follows that the largest eigenvalue of $\Omega_e$ is 1 and its multiplicity is at least the rank of $Q_e$; all other eigenvalues of $\Omega_e$ belong to the interval $[0,1)$. Let 
\begin{align}
\beta_e&=\|\Omega_e-Q_e\|,& \; \nu_e&=1-\beta_e,\label{eq:betanue}\\
\beta_E&=\max_{e\in E} \beta_e, & \; \nu_E&=1-\beta_E=\min_{e\in E} \nu_e;
\end{align}
then $\nu_e$ is the spectral gap of $\Omega_e$, and $\nu_E$ is the minimum spectral gap over all bond verification operators.

In many cases of practical interest, the underlying Hamiltonian has a high symmetry (say the symmetry of a square lattice), and  it is possible to construct bond verification operators  $\Omega_e$ that are unitarily equivalent to each other.  Accordingly,  all $\beta_e$ for $e\in E$ are equal, and so are all $\nu_e$ for $e\in E$, which means $\beta_E=\beta_e$ and $\nu_E=\nu_e$.

Given a matching $M$ of $G$, we can construct a test for the ground state $|\Psi_H\>$ by performing the bond verification strategy $\Omega_e$ for each $e\in M$ independently. The resulting  test operator reads
\begin{align}
T_M=\prod_{e\in M} \Omega_e. \label{eq:TestMatch}
\end{align}
Note that all $\Omega_e$ for $e\in M$ commute with each other, so the order in the product does not matter. In addition, a state $|\Phi\>$ satisfies   $T_M|\Phi\>=|\Phi\>$ iff $P_e|\Phi\>=0$ for each $e\in M$. So  the  state  $|\Phi\>$ can pass the test $T_M$  with certainty iff it belongs to the null space of  $P_e$ for each $e\in M$. 

 \begin{figure}[b]
	\centering
	\includegraphics[width=0.48\textwidth]{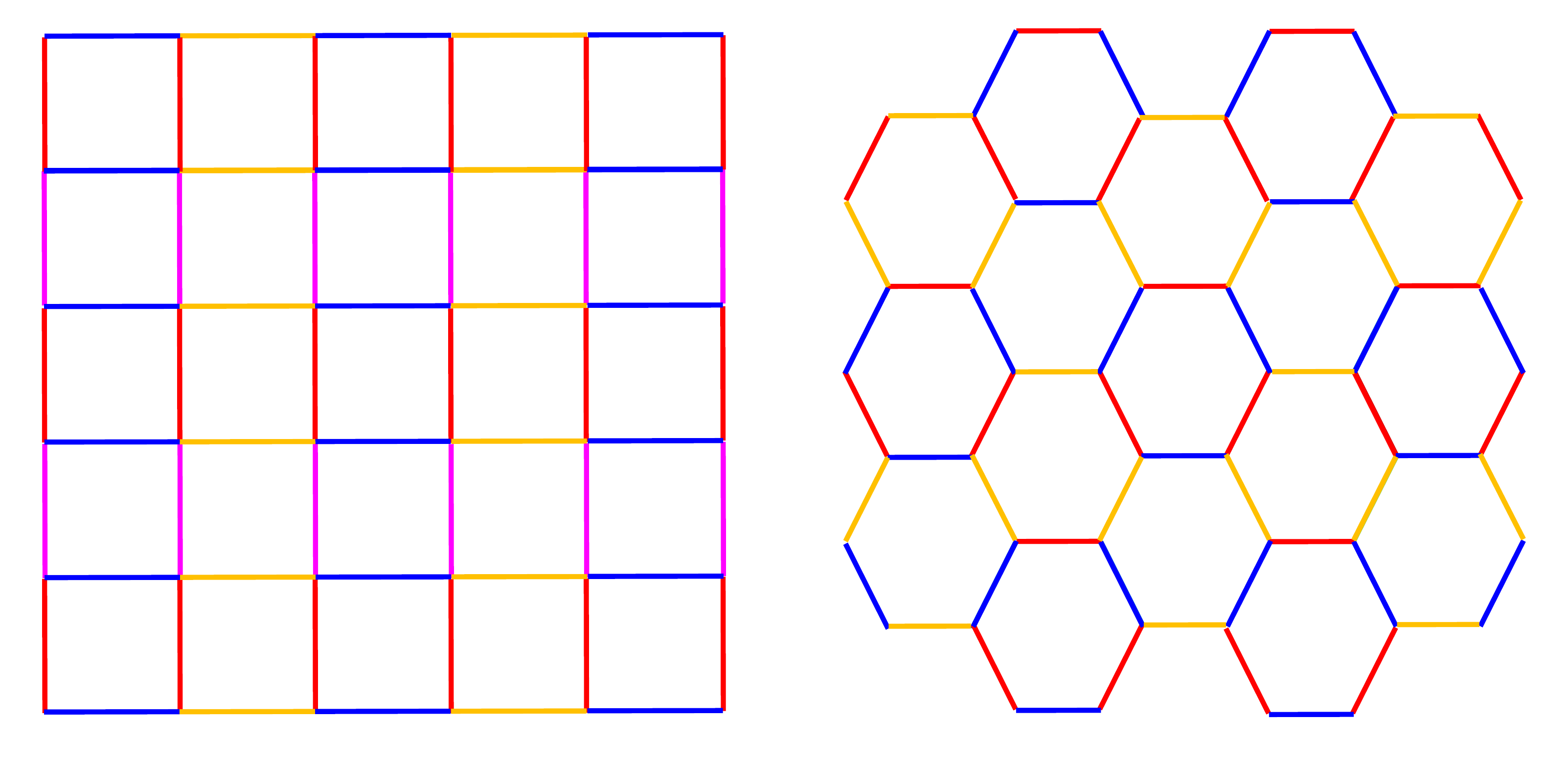}		
	\caption{\label{fig:SquareHoney}Optimal edge colorings of the square lattice and  honeycomb lattice. These optimal colorings can be used to construct efficient protocols for verifying ground states of frustration-free Hamiltonians, including 
		AKLT states. }	
\end{figure} 

Let $\scrM=\{M_l\}_{l=1}^m$ be a matching cover of $G(V,E)$ that consists of 
$m$ matchings, so that $\cup_{l=1}^m M_l =E$. For each matching $M_l$, we can construct a test $T_{M_l}$ by \eref{eq:TestMatch}. Then only the target state $|\Psi_H\>$ can pass each  test with certainty. Let $p=(p_l)_{l=1}^m$ be a probability distribution on $\scrM$, then we can construct a \emph{matching protocol} for $|\Psi_H\>$ by performing the test $T_{M_l}$ with probability $p_l$ for $l=1,2,\ldots,m$. The resulting verification operator reads 
\begin{align}
\Omega(\scrM,p)=\sum_{l=1}^m p_l T_{M_l},
\end{align}
which can be abbreviated   as $\Omega(\scrM)$ when the probability distribution $p$ is uniform, that is,
  $p_l=1/m$ for $l=1,2,\ldots, m$.

When the matchings in $\scrM$ are mutually disjoint, $\scrM$ determines an edge coloring of $G$, as illustrated in \fref{fig:SquareHoney}; the resulting protocol is  called an \emph{edge coloring protocol} or coloring protocol in short. Such a protocol has a very simple graphical description and is thus quite appealing.\\

\subsection{\label{sec:SampleC}Sample complexity}
The efficiency of the matching protocol is guaranteed by \thref{thm:SpectralGap} below, which can be proved by virtue of   
the improved detectability lemma  and quantum union bound \cite{Gao15,DonnV22}, as shown in  \sref{sec:thm:SpectralGapProof}.
\begin{thm}\label{thm:SpectralGap}
	Suppose $H$ is the frustration-free Hamiltonian in \eref{eq:Hamiltonian}. Let $\Omega(\scrM)$ be the verification operator associated with the matching cover $\scrM=\{M_l\}_{l=1}^m$ of $G(V,E)$  and bond verification operators $\{\Omega_e\}_{e\in E}$.  Then 
	\begin{align}
	\nu(\Omega(\scrM))\geq \frac{\nu_E}{m}f_m\biggl(\frac{\gamma}{s^2g^2}\biggr)\geq \frac{\nu_E\gamma}{6mg^2}, \label{eq:SpectralGapMat}
	\end{align}
	where $\nu_E=\min_{e\in E}\nu_e$ is the minimum  spectral gap of  $\Omega_e$, $s$ is defined as in \lref{lem:DL},  and 
\begin{align}\label{eq:fx}
f_m(x):=\begin{cases}
\frac{\sqrt{1+x}-1}{\sqrt{1+x}} &m=2,\\[1ex]
\frac{\sqrt{1+x}-1}{\sqrt{1+x}+1} &m\geq 3.
\end{cases}
\end{align}	
The number of tests required to verify the ground state within infidelity $\epsilon$ and significance level $\delta$ satisfies 
	\begin{align}\label{eq:TestNumUB}
N\leq \biggl\lceil\frac{m \ln (\delta^{-1})}{\nu_E\epsilon f_m\bigl(\frac{\gamma}{s^2g^2}\bigr)}
\biggr\rceil \leq 
\biggl\lceil
 \frac{6mg^2\ln (\delta^{-1})}{\nu_E\gamma\epsilon}
\biggr\rceil. 
	\end{align}
\end{thm}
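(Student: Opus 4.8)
The plan is to control $\nu(\Omega(\scrM)) = 1-\beta(\Omega(\scrM))$ through the compression $\beta(\Omega(\scrM)) = \|(1-Q_0)\Omega(\scrM)(1-Q_0)\|$, where $Q_0 = |\Psi_H\rangle\langle\Psi_H|$ projects onto the ground state. Writing $\Omega(\scrM) = \frac{1}{m}\sum_l T_{M_l}$ and using that the compressed operator is positive semidefinite, its norm is attained on a unit vector $|\phi\rangle$ orthogonal to $|\Psi_H\rangle$, for which $(1-Q_0)|\phi\rangle = |\phi\rangle$. This gives the clean reduction $\nu(\Omega(\scrM)) = \frac{1}{m}\min_{|\phi\rangle\perp|\Psi_H\rangle}\sum_{l=1}^m \eta_l$ with $\eta_l = \langle\phi|(1-T_{M_l})|\phi\rangle$, so it suffices to lower bound $\sum_l\eta_l$ for every such $|\phi\rangle$.

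First I would pass from the bond operators $\Omega_e$ to projectors. Since the edges in a matching $M_l$ are disjoint, $\{\Omega_e\}_{e\in M_l}$ is a commuting family, and from $Q_e \le \Omega_e \le 1$ with $\Omega_e Q_e = Q_e$ one finds $\Omega_e - Q_e = P_e\Omega_e P_e$, whence $1-\Omega_e \ge \nu_E P_e$. Diagonalizing the commuting family on the joint eigenspaces of the $P_e$ ($e\in M_l$) then yields the operator inequality $1 - T_{M_l}\ge \nu_E(1 - R_{M_l})$, where $R_{M_l} := \prod_{e\in M_l}Q_e$ is the projector onto the common ground space of the edges in $M_l$. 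Consequently $\eta_l \ge \nu_E\bigl(1 - \|R_{M_l}|\phi\rangle\|^2\bigr)$, and the bond gap $\nu_E$ factors out of the whole problem, leaving a statement about the projective tests $R_{M_l}$.

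Next I would bound the sequential pass amplitude. Because $Q_0$ commutes with every $Q_e$, the vector $R_{M_m}\cdots R_{M_1}|\phi\rangle$ stays orthogonal to $|\Psi_H\rangle$, so $\|R_{M_m}\cdots R_{M_1}|\phi\rangle\| \le \|(1-Q_0)R_{M_m}\cdots R_{M_1}(1-Q_0)\|$. For a coloring protocol the matchings are disjoint, so this product is exactly the product of all $Q_e$ (grouped by color, the order within a color being irrelevant), and the order-independent estimate of \lref{lem:DLnorm} gives $\|R_{M_m}\cdots R_{M_1}|\phi\rangle\| \le \sqrt{s^2 g^2/(\gamma + s^2 g^2)} = 1/\sqrt{1+x}$ with $x = \gamma/(s^2 g^2)$; the general matching-cover case is handled the same way. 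The crucial final step is a quantum union bound in the spirit of \rscite{Gao15,DonnV22}: for $m$ projective tests with sequential pass amplitude $c := \|R_{M_m}\cdots R_{M_1}|\phi\rangle\|$, the total single-test failure obeys $\sum_l\bigl(1-\|R_{M_l}|\phi\rangle\|^2\bigr)\ge 1-c$ for $m=2$ and $\ge (1-c)/(1+c)$ for $m\ge 3$. These lower bounds are decreasing in $c$, so inserting $c\le 1/\sqrt{1+x}$ reproduces precisely $f_m(x)$ [using $f_m(x)=(\sqrt{1+x}-1)^2/x$ for $m\ge3$ and $f_2(x)=x/(1+x+\sqrt{1+x})$], giving $\sum_l\eta_l \ge \nu_E f_m(x)$ and hence the first inequality in \eqref{eq:SpectralGapMat}.

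I expect the quantum union bound with these sharp, $m$-dependent constants to be the main obstacle: naive telescoping only yields $\sum_l\bigl(1-\|R_{M_l}|\phi\rangle\|^2\bigr)\gtrsim (1-c)^2/m$, which is too weak, so the improved bound of \rscite{Gao15,DonnV22} (and the $m=2$ versus $m\ge3$ dichotomy, reflecting the absence of intermediate tests when $m=2$) must be established carefully. The second inequality $\nu(\Omega(\scrM))\ge \nu_E\gamma/(6mg^2)$ then follows from the explicit form of $f_m$ together with $s<1$ and a simple upper bound on the gap $\gamma$ in terms of $g$ for a frustration-free Hamiltonian; I regard this as a routine calculus estimate and would not dwell on it. Finally, the bound \eqref{eq:TestNumUB} on the number of tests $N$ is immediate from \eqref{eq:TestNum} upon substituting the two lower bounds on $\nu(\Omega(\scrM))$.
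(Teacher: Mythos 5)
Your overall architecture is the same as the paper's: factor out the bond gap via $\Omega_e\leq Q_e+\beta_E(1-Q_e)$ to reduce to the projector tests $\Pi_l=\prod_{e\in M_l}Q_e$, bound the norm of the ordered product of these projectors (restricted to the orthocomplement of the ground state) by $(1+x)^{-1/2}$ with $x=\gamma/(s^2g^2)$ via the detectability lemma, and convert that into a bound on the averaged operator via a quantum union bound. The $m\geq 3$ branch is sound: the bound $\sum_l\bigl(1-\|R_{M_l}|\phi\>\|^2\bigr)\geq(1-c)/(1+c)$ with $c=\|R_{M_m}\cdots R_{M_1}|\phi\>\|$ is exactly Theorem~1.3 of \rcite{DonnV22} and holds pointwise, and inserting $c\leq(1+x)^{-1/2}$ gives $f_m(x)$ for $m\geq3$.

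The genuine gap is your $m=2$ claim. The pointwise inequality $\sum_{l=1}^{2}\bigl(1-\|P_l|\phi\>\|^2\bigr)\geq 1-\|P_2P_1|\phi\>\|$ is false: take $P_1=|0\>\<0|$, $P_2=|u\>\<u|$ with $\<0|u\>=4/5$, and $|\phi\>$ the top eigenvector of $P_1+P_2$; then the left side equals $2-\|P_1+P_2\|=1/5$ while $\|P_2P_1|\phi\>\|=(4/5)\sqrt{9/10}\approx0.759$, so the right side is $\approx0.241$. The failure occurs precisely at the minimizing $|\phi\>$, which is the only vector you care about, and with only the valid pointwise bound $(1-c)/(1+c)$ you would recover the weaker $m\geq3$ formula, not $f_2(x)=1-(1+x)^{-1/2}$. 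The correct route for $m=2$ is an \emph{operator-norm} statement rather than a pointwise one: the two-projector identity $\|P_1+P_2\|=1+\|P_1P_2\|$ (Lemma~1 of \rcite{LiHSS21}, used in \lref{lem:GapSumProd}), combined with $\|\bar\Pi_1\bar\Pi_2\|\leq(1+x)^{-1/2}$ from \lref{lem:DLnorm}. A second, smaller omission: for a general matching cover the matchings overlap, so $R_{M_m}\cdots R_{M_1}$ contains repeated factors $Q_e$ and the detectability lemma with the parameters $s,g$ of the original edge set does not apply verbatim; you need to pass to disjoint sub-matchings $M_l'\subseteq M_l$ that still cover $E$ and use $\Pi_l\leq\Pi_l'$, so that each single-test failure probability only increases. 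Both issues are repairable, but as written the $m=2$ step does not go through.
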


The weaker bound in \eref{eq:SpectralGapMat} and that in \eref{eq:TestNumUB} can already clarify the sample complexity of the matching protocol. The stronger bounds in the two equations are slightly more complicated and rely on the stronger detectability lemma, that is, \lref{lem:DL}. On the other hand, this improvement can sometimes significantly reduce the upper bound on the number of tests (though not the scaling behavior) required to verify the ground state of a frustration-free Hamiltonian.
In the verification of the AKLT state on the honeycomb lattice for example, the first lower  bound in \eref{eq:SpectralGapMat} is about six times of the second lower bound. So  the number of tests in \eref{eq:TestNumUB} can be reduced by a factor of six thanks to the stronger bound, which can make a huge difference for practical applications.

It is instructive to analyze the lower bound for the spectral gap in \eref{eq:SpectralGapMat} when  $\gamma/(s^2g^2)\ll 1$, which holds in most cases of practical interest. When $x\ll 1$, the function $f_m(x)$  can be approximated as
\begin{align}
f_m(x)\approx\begin{cases}
\frac{x}{2} &m=2,\\[1ex]
\frac{x}{4} &m\geq 3.
\end{cases}
\end{align}
Therefore, the spectral gap can be bounded from below as follows,
\begin{align}
\!\!\!\nu(\Omega(\scrM))\geq \frac{\nu_E}{m}f_m\biggl(\frac{\gamma}{s^2g^2}\biggr)\approx
\begin{cases}
\frac{\nu_E\gamma}{2ms^2g^2} & m=2,\\[0.5ex]
\frac{\nu_E\gamma}{4ms^2g^2} & m\geq 3,
\end{cases} 
\end{align}
which is tighter than the second bound in \eref{eq:SpectralGapMat}. Accordingly, the number of tests required to verify the ground state within infidelity $\epsilon$ and significance level $\delta$ satisfies 
\begin{align}
N \lesssim
\begin{cases}
\frac{2ms^2g^2\ln (\delta^{-1})}{\nu_E\gamma\epsilon} &m=2,\\[1.5ex]
\frac{4ms^2g^2\ln (\delta^{-1})}{\nu_E\gamma\epsilon} &m\geq 3.
\end{cases} 
\end{align}

If the underlying Hamiltonian $H$ is 2-local and each projector $P_e$ acts on two nodes, then $G$ is a (simple) graph and $g\leq 2\Delta(G)-2$, where $\Delta(G)$ is the degree of $G$, so  \thref{thm:SpectralGap} implies that
\begin{align}
\nu(\Omega(\scrM))\geq  \frac{\nu_E\gamma}{24m[\Delta(G)-1]^2}. \label{eq:SpectralGapMat2local}
\end{align}		
The cardinality of the matching cover $\scrM$ is at least the chromatic index $\chi'(G)$, which satisfies $\chi'(G)\leq \Delta(G)+1$ by Vizing's theorem \cite{Vizi64,Volo09book}. If  $\scrM$ is optimal, that is, $m=|\scrM|=\chi'(G)$, then \eref{eq:SpectralGapMat2local} yields
\begin{align}
\nu(\Omega(\scrM))&\geq \frac{\nu_E\gamma}{24[\Delta(G)+1][\Delta(G)-1]^2}\nonumber\\&
\geq \frac{\nu_E\gamma}{24\Delta(G)^3}. 
\end{align}
Here $\nu_E$ and $\Delta(G)$ do not grow with the system size for most Hamiltonians of practical interest, including those defined on various lattices as illustrated in  \fref{fig:SquareHoney}. If in addition the spectral gap $\gamma$ has a universal lower bound, then the spectral gap $\nu(\Omega(\scrM))$ has a universal lower bound, so the number of tests required to verify the ground state does not grow with the system size.  Compared with previous works \cite{CramPFS10,LanyMHB17,HangKSE17,BermHSR18,GluzKEA18,TakeM18,CruzBTS22}, our approach can achieve much better scaling behaviors with respect to the system size, spectral gap $\gamma$, and  infidelity~$\epsilon$.

Since coloring protocols are special matching protocols, all results on matching protocols presented above also apply to coloring protocols. In addition, we can derive the following result tailored to coloring protocols; see \sref{sec:thm:SpectralGap2Proof} for a proof.

\begin{thm}\label{thm:SpectralGap2}
	Suppose the matching cover $\scrM$ in \thref{thm:SpectralGap} is actually an edge coloring of $G$ and  $p=(|M_1|, |M_2|, \ldots, |M_m|)/|E|$; then 
	\begin{align}
	\nu(\Omega(\scrM,p))\geq \frac{\nu_E\gamma}{|E|}.  \label{eq:SpectralGapColor}
	\end{align}
	The inequality is saturated if $\scrM$ is the trivial edge coloring with $|\scrM|=|E|$ and all bond verification operators $\Omega_e$ are homogeneous and have the same spectral gap. 
\end{thm}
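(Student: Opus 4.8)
The plan is to bypass the detectability lemma entirely and instead dominate the verification operator $\Omega(\scrM,p)$ directly by the Hamiltonian $H$, after which the spectral gap $\gamma$ does the rest. The whole argument rests on two elementary operator inequalities together with the fact that the prescribed weights $p_l=|M_l|/|E|$ are exactly tuned to the coloring.

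First I would record the two inequalities. Since each bond verification operator obeys $Q_e\leq\Omega_e\leq 1$ and $\Omega_e Q_e=Q_e$, the positive operator $\Omega_e-Q_e$ annihilates the range of $Q_e$ from both sides, so it is supported on the range of $P_e=1-Q_e$ and has norm $\beta_e$; hence $\Omega_e-Q_e\leq\beta_e P_e$ and
\[
\Omega_e\leq 1-(1-\beta_e)P_e=1-\nu_e P_e\leq 1-\nu_E P_e .
\]
Second, for a single matching the factors of $T_{M_l}=\prod_{e\in M_l}\Omega_e$ commute and are positive contractions, so for any fixed $e\in M_l$ one writes $T_{M_l}=\Omega_e^{1/2}\bigl(\prod_{e'\neq e}\Omega_{e'}\bigr)\Omega_e^{1/2}\leq\Omega_e$, using $\prod_{e'\neq e}\Omega_{e'}\leq 1$. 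Averaging this over the $|M_l|$ edges of $M_l$ gives $T_{M_l}\leq|M_l|^{-1}\sum_{e\in M_l}\Omega_e$.

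The central step is the weighted sum. With $p_l=|M_l|/|E|$ the factor $|M_l|$ cancels, and because an edge coloring partitions $E=\sqcup_l M_l$ I obtain
\[
\Omega(\scrM,p)=\sum_{l}\frac{|M_l|}{|E|}T_{M_l}\leq\frac{1}{|E|}\sum_{e\in E}\Omega_e\leq 1-\frac{\nu_E}{|E|}\sum_{e\in E}P_e=1-\frac{\nu_E}{|E|}H .
\]
Since $T_{M_l}Q_0=Q_0$ one has $\beta(\Omega(\scrM,p))=\|(1-Q_0)\Omega(\scrM,p)(1-Q_0)\|=\max_{|\phi\>\perp|\Psi_H\>}\<\phi|\Omega(\scrM,p)|\phi\>$; and every unit $|\phi\>$ orthogonal to $|\Psi_H\>$ satisfies $\<\phi|H|\phi\>\geq\gamma$, so the displayed bound gives $\<\phi|\Omega(\scrM,p)|\phi\>\leq 1-\nu_E\gamma/|E|$. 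Hence $\nu(\Omega(\scrM,p))=1-\beta(\Omega(\scrM,p))\geq\nu_E\gamma/|E|$, which is the claim.

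For saturation, the trivial coloring forces $|M_l|=1$, so $T_{M_l}=\Omega_{e_l}$ and the second inequality becomes an equality; homogeneous bond operators with a common spectral gap give $\Omega_e=Q_e+\beta_E P_e=1-\nu_E P_e$ exactly, making the first inequality an equality as well. Then $\Omega(\scrM,p)=1-(\nu_E/|E|)H$ exactly, whose largest eigenvalue on the orthogonal complement of the ground space is precisely $1-\nu_E\gamma/|E|$, so the bound is attained. I do not expect a serious obstacle here; the only point needing genuine care is checking that the weighting $p_l=|M_l|/|E|$ is exactly what makes the $|M_l|$ factors telescope into the bare edge sum $\sum_{e}\Omega_e$, since it is this cancellation — and the disjointness of the color classes — that produces the clean $1/|E|$ scaling without invoking \lref{lem:DL}.
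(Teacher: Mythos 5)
Your proof is correct and follows essentially the same route as the paper's: the chain $\Omega(\scrM,p)\leq\frac{1}{|E|}\sum_{e\in E}\Omega_e\leq 1-\frac{\nu_E}{|E|}H$ exploiting the weights $p_l=|M_l|/|E|$ and the disjointness of the color classes, followed by the spectral gap of $H$, with the same saturation argument. The only difference is that you supply explicit justifications (the sandwich argument for $T_{M_l}\leq\Omega_e$ and the support argument for $\Omega_e-Q_e\leq\beta_e P_e$) for steps the paper asserts without comment.
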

If $H$ is 2-local and each projector $P_e$ acts on two nodes, then $|E|\leq n\Delta(G)/2\leq n(n-1)/2$, so \thref{thm:SpectralGap2} means
	\begin{align}
\nu(\Omega(\scrM,p))\geq \frac{2\nu_E\gamma}{n\Delta(G)}\geq \frac{2\nu_E\gamma}{n(n-1)}. \label{eq:SpectralGapColor2}
\end{align}

\subsection{\label{sec:thm:SpectralGapProof}Proof of \thref{thm:SpectralGap}}
\begin{proof}[Proof of \thref{thm:SpectralGap}]
 For $l=1,2,\ldots, m$ 	let $T_{M_l}$ be the test operators associated with the matchings $M_l$ as defined in \eref{eq:TestMatch}. Let   
	\begin{gather}
	\Pi_l:=\prod_{e\in M_l}Q_e=\prod_{e\in M_l}(1-P_e),\\
	\Omega_0(\scrM):=\frac{1}{m}\sum_{l=1}^m\Pi_l. \label{eq:Omega0}
\end{gather}
	Then $\Pi_l$ are test projectors for the ground state $|\Psi_H\>$, and $\Omega_0(\scrM)$ is a verification operator for $|\Psi_H\>$,  although in general they cannot be realized by local measurements.

Now the fact $\Omega_e\leq Q_e +\beta_E (1-Q_e)$ means
	\begin{align}
	T_{M_l}&= \prod_{e\in M_l}\Omega_e\leq \prod_{e\in M_l} [Q_e +\beta_E (1-Q_e)]\nonumber\\
	&\leq \Pi_l +\beta_E (1-\Pi_l)=\nu_E\Pi_l +\beta_E, \\
	\Omega(\scrM)&=\frac{1}{m}\sum_{l=1}^m T_{M_l}\leq \frac{1}{m}\sum_{l=1}^m(\nu_E\Pi_l +\beta_E) \nonumber\\
	&= \nu_E\Omega_0(\scrM) +\beta_E,
	\end{align}
	which in turn imply that 	 
\begin{align}
\nu(\Omega(\scrM))&\geq \nu_E\nu(\Omega_0(\scrM)). \label{eq:nunup}
\end{align}		
In conjunction with \lref{lem:SpectralGap} below we can  deduce that
	\begin{align}
	\nu(\Omega(\scrM))&\geq \frac{\nu_E}{m}f_m\biggl(\frac{\gamma}{s^2g^2}\biggr)\geq 
	\frac{\nu_E}{m}f_m\biggl(\frac{\gamma}{g^2}\biggr)\nonumber\\
	&\geq \frac{\nu_E\gamma}{6mg^2},
	\end{align}	
	which confirms  \eref{eq:SpectralGapMat}. \Eref{eq:TestNumUB} is an immediate consequence of \esref{eq:TestNum}{eq:SpectralGapMat}.
\end{proof}

Next, we prove an auxiliary lemma employed in the proof of \thref{thm:SpectralGap}.
\begin{lem}\label{lem:SpectralGap}
	Let $\Omega_0(\scrM)$ be the verification operator defined in \eref{eq:Omega0} following the premise in \thref{thm:SpectralGap}.  Then 
	\begin{align}
	\nu(\Omega_0(\scrM))&\geq\frac{1}{m}f_m\biggl(\frac{\gamma}{s^2g^2}\biggr)\geq \frac{1}{m} f_m\biggl(\frac{\gamma}{g^2}\biggr)\nonumber\\
	&\geq \frac{\gamma}{6mg^2}, \label{eq:SpectralGapId}
	\end{align}
	where $f_m(x)$ is defined in \eref{eq:fx}.
\end{lem}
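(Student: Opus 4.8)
The plan is to strip off the ground state, reduce the spectral gap of $\Omega_0(\scrM)$ to a lower bound on a sum of local failure probabilities, control the ordered product of the test projectors with the improved detectability lemma, and finally convert that product bound into a gap bound via the quantum union bound. First I would pass to the orthogonal complement of the ground state. Each $\Pi_l=\prod_{e\in M_l}Q_e$ fixes $|\Psi_H\>$, hence commutes with $Q_0=\proj{\Psi_H}$, so $\bar{\Pi}_l:=(1-Q_0)\Pi_l(1-Q_0)=\Pi_l-Q_0$ is again a projector, now on $(1-Q_0)\caH$, and $\bar{\Omega}_0:=(1-Q_0)\Omega_0(\scrM)(1-Q_0)=\frac1m\sum_l\bar{\Pi}_l$. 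Thus $\nu(\Omega_0(\scrM))=1-\|\bar{\Omega}_0\|$. Taking a unit top eigenvector $|v\>$ of $\bar{\Omega}_0$ and writing $b_l=\|(1-\bar{\Pi}_l)|v\>\|^2$, one obtains the clean identity $\nu(\Omega_0(\scrM))=\frac1m\sum_{l=1}^m b_l$, so the entire task becomes a lower bound on $\sum_l b_l$.

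Next I would control the ordered product $\bar{\Pi}_m\cdots\bar{\Pi}_1$. Because $Q_0$ commutes with every $Q_e$, one has $\bar{\Pi}_m\cdots\bar{\Pi}_1=(1-Q_0)\Pi_m\cdots\Pi_1$, which is an ordered product of the projectors $\{\bar{Q}_e\}_{e\in E}$ in which every edge occurs (since $\scrM$ covers $E$; repeated factors coming from overlapping matchings are harmless contractions and leave the commutation data $s,g$ untouched). \Lref{lem:DLnorm} then gives $\tau:=\|\bar{\Pi}_m\cdots\bar{\Pi}_1\|\le\bigl(s^2g^2/(\gamma+s^2g^2)\bigr)^{1/2}=1/\sqrt{1+\gamma/(s^2g^2)}$, and likewise $\tau\le 1/\sqrt{1+\gamma/g^2}$, using the order-independent bounds in \eref{eq:DLnorm}.

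The final and hardest step converts $\tau$ into a gap, and this is where the piecewise form of $f_m$ in \eref{eq:fx} originates. For $m=2$ I would bypass the union bound entirely: by Jordan's lemma the average of two projectors has top eigenvalue $\tfrac12\bigl(1+\|\bar{\Pi}_1\bar{\Pi}_2\|\bigr)$, whence $\nu=\tfrac{1-\tau}{2}=\tfrac12 f_2(\gamma/(s^2g^2))$ after inserting the bound on $\tau$. For $m\ge3$ no such exact two-block picture is available, and here the quantum union bound \cite{Gao15,DonnV22} is essential: applied to the sequential measurement $\bar{\Pi}_1,\dots,\bar{\Pi}_m$ on $|v\>$ it yields $\|\bar{\Pi}_m\cdots\bar{\Pi}_1|v\>\|\ge\frac{1-B}{1+B}$ with $B=\sum_l b_l$; since the left-hand side is at most $\tau$, solving gives $B\ge\frac{1-\tau}{1+\tau}$ and therefore $\nu=B/m\ge\frac1m f_m(\gamma/(s^2g^2))$. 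The remaining inequalities in \eref{eq:SpectralGapId} are then routine: monotonicity of $f_m$ together with $s\le1$ gives the $f_m(\gamma/g^2)$ bound, and the elementary estimate $f_m(y)\ge y/6$ yields the closed form $\gamma/(6mg^2)$.

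I expect the $m\ge3$ union-bound step to be the main obstacle. The naive telescoping estimate only produces $\sum_l\sqrt{b_l}\ge 1-\tau$, which after Cauchy--Schwarz degrades to $\nu\gtrsim(1-\tau)^2/m^2$ — too weak by roughly a factor of $m$ and, crucially, with the wrong scaling ($x^2$ rather than $x$) in the small-gap regime $\gamma/(s^2g^2)\ll1$ that dominates the applications. Securing a bound that is \emph{linear} in the failure probabilities $b_l$, with the sharp constant that reproduces exactly $\frac{1-\tau}{1+\tau}$, is precisely what the generalized quantum union bound must deliver, and pinning down that constant is the delicate part of the argument.
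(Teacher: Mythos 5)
Your route is essentially the paper's: the reduction to $\bar{\Omega}_0=\frac{1}{m}\sum_l\bar{\Pi}_l$, the control of the ordered product via \lref{lem:DLnorm}, and the conversion of that product bound into a gap bound---Jordan's lemma for $m=2$, the quantum union bound of \rcite{Gao15,DonnV22} for $m\geq 3$---are exactly the ingredients of the paper's proof; the union-bound step you carry out inline is packaged there as \lref{lem:GapSumProd} and proved by precisely the computation you describe ($y\geq(1-x)/(1+x)$ applied to the top eigenvector). For a matching cover consisting of \emph{disjoint} matchings your argument goes through and reproduces \eref{eq:SpectralGapId}.

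The one genuine flaw is your treatment of overlapping matchings. The parenthetical claim that repeated factors in $\Pi_m\cdots\Pi_1$ ``are harmless contractions and leave the commutation data $s,g$ untouched'' does not hold: if $e\in M_1\cap M_3$ but $e\notin M_2$, the two copies of $Q_e$ are separated by $\Pi_2$ and cannot be merged unless $Q_e$ commutes with every factor of $\Pi_2$; and if instead you apply \lref{lem:DL} to the list of projectors \emph{with multiplicities}, the quantities $g_k$ and $\tilde{g}$ count list positions, so repeated non-commuting neighbours inflate them and the bound $\tilde{g}\leq g^2$ in terms of the Hamiltonian's $g$ is no longer automatic. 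The paper sidesteps this by passing to disjoint sub-matchings $M_l'\subseteq M_l$: since $\Pi_l\leq\Pi_l'$, one has $\Omega_0(\scrM)\leq\Omega_0(\scrM')$ and hence $\nu(\Omega_0(\scrM))\geq\nu(\Omega_0(\scrM'))$, after which the disjoint-case argument applies verbatim. Your union-bound version admits the same fix: apply the bound to the projectors $\bar{\Pi}_l'$ and the top eigenvector $|v\>$ of $\bar{\Omega}_0(\scrM)$, and use $\<v|(1-\bar{\Pi}_l)|v\>\geq\<v|(1-\bar{\Pi}_l')|v\>$. A second, minor point: the ``elementary estimate'' $f_m(y)\geq y/6$ is false for large $y$ (the function $f_m$ is bounded above by $1$); you need the observation $\gamma/g^2\leq 1$ together with concavity and monotonicity of $f_m$, which give $f_m(y)\geq f_m(1)\lsp y=\bigl(3-2\sqrt{2}\lsp\bigr)y>y/6$ on $[0,1]$.
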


\begin{proof}[Proof of \lref{lem:SpectralGap}]
	Let 
	\begin{align}
	&\bar{\Pi}_l =\Pi_l-|\Psi_H\>\<\Psi_H|,\quad l=1,2,\ldots, m,\\ 
	&\!\!\!\bar{\Omega}_0(\scrM)=\Omega_0(\scrM)-|\Psi_H\>\<\Psi_H|=\frac{1}{m}\sum_{l=1}^m \bar{\Pi}_l;
	\end{align}
	then  $\bar{\Pi}_l$ are projectors.	First, suppose the matchings in $\scrM$ are mutually disjoint, so that $\scrM$ corresponds to an edge coloring. If in addition $m\geq 3$, then 
	\begin{align}
	&\nu(\Omega_0(\scrM))=1-\|\bar{\Omega}_0(\scrM)\|\nonumber\\
	&\geq \frac{1-\|\bar{\Pi}_1\bar{\Pi}_2\cdots \bar{\Pi}_m \|}{m(1+\|\bar{\Pi}_1\bar{\Pi}_2\cdots \bar{\Pi}_m \|)}\nonumber\\
	&\geq \frac{1}{m}\frac{1-[1+(\gamma/\zeta)]^{-1/2} }{1+[1+(\gamma/\zeta)]^{-1/2}}=\frac{1}{m}\frac{[1+(\gamma/\zeta)]^{1/2} -1}{[1+(\gamma/\zeta)]^{1/2}+1} \nonumber\\
	&=   \frac{1}{m}f_m\Bigl(\frac{\gamma}{\zeta}\Bigr)\geq \frac{1}{m}f_m\biggl(\frac{\gamma}{s^2\tilde{g}}\biggr)\geq \frac{1}{m}f_m\biggl(\frac{\gamma}{s^2g^2}\biggr)\nonumber\\
	&\geq 
	\frac{1}{m}f_m\biggl(\frac{\gamma}{g^2}\biggr). \label{eq:nup}
	\end{align}	
	Here the first inequality follows from  \lref{lem:GapSumProd} below; the second inequality follows from  \lref{lem:DLnorm}, which implies that
	\begin{align}
	\|\bar{\Pi}_1\bar{\Pi}_2\cdots \bar{\Pi}_m \|^2\leq \frac{1}{1+(\gamma/\zeta)},
	\end{align}	
	where $\zeta$ is defined as in \lref{lem:DL}.
	The last three inequalities  in \eref{eq:nup} are due to the following  inequalities
	\begin{align}
	\zeta\leq s^2\tilde{g}\leq s^2 g^2\leq g^2
	\end{align}
	and the fact that the function $f_m(x)$
	is monotonically increasing in $x$ for $x\geq 0$, which is clear from the definition in \eref{eq:fx}.
	
	Meanwhile, we have  $\gamma\leq 1$ and $g\geq 1$, which means $\gamma/g^2\leq 1$. So  \eref{eq:nup} implies that
	\begin{align}
	\nu(\Omega_0(\scrM))
	&\geq \frac{1}{m}f_m\biggl(\frac{\gamma}{g^2}\biggr)
	\geq \bigl(3-2\sqrt{2}\lsp\bigr)\frac{\gamma}{mg^2}\nonumber\\
	&>\frac{\gamma}{6mg^2}, \label{eq:nup2}
	\end{align}
	which confirms \eref{eq:SpectralGapId}.
	Note that the function $f_m(x)$ is monotonically increasing  and concave in $x$ for $x\geq 0$.
	
	When $m=2$, the first inequality in \eref{eq:nup} can be improved by \lref{lem:GapSumProd} below, then \eref{eq:SpectralGapId} follows from a similar reason as presented above. 
	
	Next, we turn to the general situation in which the matchings in $\scrM=\{M_l\}_{l=1}^m$ are not necessarily disjoint. In this case, we can always construct a matching cover $\scrM'=\{M_l'\}_{l=1}^m$  composed of mutually disjoint matchings $M_l'$ that satisfy  $M_l'\subseteq M_l$ for $l=1,2,\ldots, m$. 
	Let 
	\begin{align}
	\Pi_l'&:=\prod_{e\in M_l'}Q_e, \quad 
	\Omega_0(\scrM'):=\frac{1}{m}\sum_{l=1}^m\Pi_l'. 
	\end{align}
	Then
	\begin{align}
	\Pi_l\leq \Pi_l', \quad 
	\Omega_0(\scrM)\leq \Omega_0(\scrM'),
	\end{align}
	which implies that 
	\begin{align}
	\nu(\Omega_0(\scrM))&\geq \nu(\Omega_0(\scrM'))
	\geq\frac{1}{m}f_m\biggl(\frac{\gamma}{s^2g^2}\biggr)\nonumber\\
	&
	\geq\frac{1}{m}f_m\biggl(\frac{\gamma}{g^2}\biggr)
	\geq \frac{\gamma}{6mg^2}.
	\end{align}
	This observation completes the proof of \lref{lem:SpectralGap}.
\end{proof}

The following technical lemma employed in the proof of \lref{lem:SpectralGap} is proved in \aref{app:lem:GapSumProdproof}.
\begin{lem}\label{lem:GapSumProd}
	Suppose $P_1, P_2, \ldots, P_m$ are $m$ projectors acting on the Hilbert space $\caH$. Let $O=\sum_{j=1}^m P_j/m$; then 
	\begin{equation}
	1-\|O\|\geq \frac{1-\|P_1P_2\cdots P_m \|}{m(1+\|P_1P_2\cdots P_m \|)}. \label{eq:GapSumProd}
	\end{equation}
	If $m=2$, then 
	\begin{equation}
	1-\|O\|= \frac{1-\|P_1P_2 \|}{2}. \label{eq:GapSumProd2}
	\end{equation}
\end{lem}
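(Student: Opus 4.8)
The plan is to treat the general lower bound and the exact $m=2$ identity by different means: the general bound reduces, through the top eigenvector of $O$, to a sharp quantum union bound, while the $m=2$ case is an exact two-projector computation.

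First I would reduce the general inequality to a statement about a single vector. Since $O=\frac1m\sum_j P_j$ is Hermitian and positive semidefinite with $\|O\|\le1$, write $\mu=\|O\|$ and let $\ket{\psi}$ be a unit eigenvector with $O\ket{\psi}=\mu\ket{\psi}$ (in infinite dimensions one takes an approximating sequence $\ket{\psi_n}$ with $\langle\psi_n|O|\psi_n\rangle\to\mu$ and passes to the limit at the end). Setting $Q_j=1-P_j$ and $\epsilon_j=\|Q_j\ket{\psi}\|^2=\langle\psi|Q_j|\psi\rangle$, the eigenvalue equation gives $\sum_j\epsilon_j=m(1-\mu)=:\eta$. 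The target then becomes the pointwise estimate $\|P_1P_2\cdots P_m\ket{\psi}\|\ge\frac{1-\eta}{1+\eta}$. Granting this, and using $\|P_1\cdots P_m\ket{\psi}\|\le\|P_1\cdots P_m\|=:c$, I get $c\ge\frac{1-\eta}{1+\eta}$; since $\eta\mapsto\frac{1-\eta}{1+\eta}$ is strictly decreasing and equals $c$ at $\eta=\frac{1-c}{1+c}$, this forces $\eta\ge\frac{1-c}{1+c}$, i.e.\ $1-\|O\|=\eta/m\ge\frac{1-c}{m(1+c)}$, which is \eqref{eq:GapSumProd}.

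The heart of the argument is the pointwise bound $\|P_1\cdots P_m\ket{\psi}\|\ge\frac{1-\eta}{1+\eta}$ for any unit vector, with $\eta=\sum_{j}\|Q_j\ket{\psi}\|^2$. This is exactly where I would invoke the generalized quantum union bound of \rscite{Gao15,DonnV22}: reading the ordered product $P_1\cdots P_m$ as a sequence of projective tests applied to $\ket{\psi}$, the surviving amplitude is controlled by the total failure weight $\eta=\sum_j\epsilon_j$, and the sharpened (non-commutative) union bound yields precisely the $\frac{1-\eta}{1+\eta}$ lower bound. I expect this to be the main obstacle, since the naive telescoping estimate $\|(1-P_1\cdots P_m)\ket{\psi}\|\le\sum_j\|Q_j\ket{\psi}\|=\sum_j\sqrt{\epsilon_j}$ gives only $\|P_1\cdots P_m\ket{\psi}\|\ge1-\sum_j\sqrt{\epsilon_j}$, hence the weaker conclusion $1-\|O\|\ge(1-c)^2/m^2$; the crucial improvement is the dependence on $\sum_j\epsilon_j$ rather than $\sum_j\sqrt{\epsilon_j}$, which is precisely what the improved union bound supplies and what produces the sharp factor $(1+c)$ in the denominator.

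Finally, for $m=2$ I would prove the exact identity \eqref{eq:GapSumProd2} directly from the structure of two projectors. By Jordan's lemma, $P_1$ and $P_2$ are simultaneously block-diagonal with blocks of dimension at most two; on a two-dimensional block with principal angle $\theta$ one computes that $P_1+P_2$ has eigenvalues $1\pm\cos\theta$ while $P_1P_2$ has largest singular value $\cos\theta$, so that $\|P_1+P_2\|=1+\|P_1P_2\|$ blockwise, and the degenerate blocks ($P_1=P_2$, or orthogonal ranges $P_1P_2=0$) satisfy the same relation. Taking the supremum over blocks gives $\|O\|=\tfrac12(1+\|P_1P_2\|)$ and hence $1-\|O\|=\tfrac12(1-\|P_1P_2\|)$, the claimed equality, which is strictly stronger than what the general bound would give for $m=2$. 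With the eigenvector reduction, the monotone inversion in $\eta$, and this two-projector computation being routine, the only substantive input is the sharp union bound invoked above.
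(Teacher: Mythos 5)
Your proof of the general bound \eqref{eq:GapSumProd} is correct and is essentially the paper's argument: the paper also evaluates at a top eigenvector of $O$, invokes Theorem~1.3 of \rcite{DonnV22} in the form $x+\sqrt{1-x^2}\sqrt{y}\geq 1$ with $x=\|P_1\cdots P_m|\psi\>\|$ and $y=m-m\<\psi|O|\psi\>$, and deduces $y\geq(1-x)/(1+x)$; your pointwise bound $x\geq(1-\eta)/(1+\eta)$ is just the inverse of that same inequality under the involution $t\mapsto(1-t)/(1+t)$, so the two derivations coincide up to the direction in which the algebra is run. The one place you genuinely depart from the paper is the $m=2$ identity \eqref{eq:GapSumProd2}: the paper simply cites Lemma~1 of \rcite{LiHSS21}, whereas you prove it from scratch via Jordan's two-projector decomposition, computing $\|P_1+P_2\|=1+\|P_1P_2\|$ blockwise. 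That computation is correct and makes the lemma self-contained (modulo the wholly degenerate case $P_1=P_2=0$, where the identity as stated fails in both treatments), at the cost of a page of routine linear algebra that the citation avoids. No gaps.
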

When $P_2=P_3=\cdots =P_m$ and $P_1$ are mutually orthogonal rank-1 projectors, we have $\|O\|=(m-1)/m$ and $\|P_1P_2\cdots P_m \|=0$, in which case the inequality in \eref{eq:GapSumProd} is saturated. So the lower bound in \eref{eq:GapSumProd} is nearly optimal without further constraints.

\subsection{\label{sec:thm:SpectralGap2Proof}Proof of  \thref{thm:SpectralGap2}}
\begin{proof}[Proof of \thref{thm:SpectralGap2}]
	Let $T_{M_l}$ be the test operator associated with the matching $M_l\in\scrM$ as defined in \eref{eq:TestMatch}. Then 
	\begin{align}
	&\Omega(\scrM,p)=\sum_{l=1}^m p_l T_{M_l} =\sum_{l=1}^m p_l\prod_{e\in M_l} \Omega_e\nonumber\\
	&\leq \sum_{l=1}^m p_l \frac{1}{|M_l|}\sum_{e\in M_l} \Omega_e=\frac{1}{|E|}\sum_{e\in E} \Omega_e\nonumber\\
	&\leq \frac{1}{|E|}\sum_{e\in E} (1-P_e+\beta_E P_e)\nonumber\\
	&=\frac{1}{|E|}\sum_{e\in E} (1-\nu_E P_e)=1-\frac{\nu_E}{|E|}H, \label{eq:SpectralGap2Proof}
	\end{align}
	which implies \eref{eq:SpectralGapColor}. Here the third equality holds because $p=(|M_1|, |M_2|, \ldots, |M_m|)/|E|$, $\cup_l M_l=E$,  and  all matchings $M_l$ in $\scrM$ are pairwise disjoint.

	If $\scrM$ is the trivial edge coloring, then each matching $M_l$ contains only one edge, so that $m=|E|$ and $p_l=1/|E|$ for $l=1,2,\ldots, |E|$. Consequently,  the first inequality in \eref{eq:SpectralGap2Proof} is saturated. If in addition all bond verification operators $\Omega_e$ are homogeneous and have the same spectral gap, then the second inequality in \eref{eq:SpectralGap2Proof} is also saturated, which means
	\begin{align}
	\Omega(\scrM,p)&=\Omega(\scrM)
	=1-\frac{\nu_E}{|E|}H,\\
	\nu(\Omega(\scrM,p))&= \nu(\Omega(\scrM))=\frac{\nu_E\gamma(H)}{|E|}.
	\end{align}
	So the inequality in \eref{eq:SpectralGapColor} is saturated in this case. 
\end{proof}

\section{\label{sec:AKLT}Efficient verification of AKLT states}
 To illustrate the power of our general recipe, here we consider  AKLT states defined on general graphs without loops; see \aref{app:Companion} and the companion paper~\cite{ChenLZ23} for more details. For any given graph $G(V,E)$, an AKLT Hamiltonian can be constructed as follows \cite{AfflKLT87,AfflKLT88,KiriK09,KoreX10}. For each vertex $j$ we assign a spin operator $\vec{S}_j=(S_{j,x}, S_{j,y}, S_{j,z})$ with spin value $S_j=\deg(j)/2$, which corresponds to a Hilbert space of dimension $2S_j+1$. Let $S_e=S_j+S_k$ for each edge $e=\{j,k\}\in E$ and  $S_E=\max_{e\in E}S_e$; then $S_E\leq \Delta(G)$.
Let $P_e$ be the projector onto the spin-$S_e$ subspace of spins $j$ and $k$; 
then the AKLT Hamiltonian can be expressed as  $H_G=\sum_{e\in E} P_e$; it is 
 frustration free and has a unique ground state \cite{KiriK09,KoreX10}, which is denoted by $|\Psi_G\>$.

\subsection{Protocols and sample complexity}
To verify the AKLT state $|\Psi_G\>$, we need  to construct suitable bond verification protocols. This is  a two-body problem for any given bond $e$, so we  focus on the two nodes $j,k$ connected by  $e$ and ignore all other nodes for the moment. 
Given any real unit vector $\vec{r}=(r_x, r_y, r_z)$ in dimension~3, let  $S_{j,\vec{r}}=\vec{S}_j\cdot \vec{r}=r_x S_{j,x} +r_y S_{j,y}+r_z S_{j,z}$ be the spin operator along direction $\vec{r}$. Then $S_{j,\vec{r}}$ has $2S_j+1$ eigenvalues, namely, $-S_j, -S_j +1, \ldots, S_j-1, S_j$. Now a bond test can be constructed as follows: both parties perform the spin measurement along direction $\vec{r}$, and the test is passed unless they both obtain the maximum eigenvalues  or both obtain the minimum eigenvalues.

Let $|+\>_{j,\vec{r}}$ ($|-\>_{j,\vec{r}}$) be the eigenstate of $S_{j,\vec{r}}$ tied to the maximum eigenvalue $S_j$  (minimum eigenvalue $-S_j$). Define  $|\pm\>_{k,\vec{r}}$ in a similar way and let 
\begin{align}\label{eq:pmer} 
\begin{aligned}
|+\>_{e,\vec{r}}&=|+\>_{j,\vec{r}}\otimes |+\>_{k,\vec{r}}, \\ |-\>_{e,\vec{r}}&=|-\>_{j,\vec{r}}\otimes |-\>_{k,\vec{r}}. 
\end{aligned}
\end{align}
Then the bond test projector  can be expressed  as
\begin{align}
R_{e,\vec{r}}:=&1-|+\>_{e,\vec{r}}\<+|-|-\>_{e,\vec{r}}\<-|, \label{eq:Rer}
\end{align}
which satisfies  $R_{e,\vec{r}}Q_e=Q_e$ as expected. The trace of this test projector reads
\begin{align}
\tr(R_{e,\vec{r}})=&(2S_j+1)(2S_k+1)-2.  \label{eq:RerTr}
\end{align}
In addition,  $R_{e,\vec{r}}=R_{e,-\vec{r}}$, so the tests associated with antipodal points on the unit sphere are identical. As shown in Sec. IV A in the companion paper \cite{ChenLZ23}, tests of the form $R_{e,\vec{r}}$ are the optimal choice based on spin measurements.

Let $\mu$ be a probability distribution on the 
unit sphere; then we can construct a bond verification protocol by performing each test $R_{e,\vec{r}}$ according to  $\mu$. The resulting bond verification operator reads
\begin{align}
\Omega_e(\mu)=\int R_{e,\vec{r}} \rmd \mu(\vec{r}).
\end{align}
According to \eref{eq:RerTr}, its trace is given by
\begin{align}
\tr[\Omega_e(\mu)]=&(2S_j+1)(2S_k+1)-2, \label{eq:OmegaemuTr}
\end{align}
which is independent of $\mu$.
Let $\mu_1$ and $\mu_2$ be two probability distributions on the unit sphere and $\mu=p\mu_1 +(1-p)\mu_2$ with $0\leq p\leq 1$. Then by definition it is straightforward to verify that
\begin{align}
\nu(\Omega_e(\mu))\geq p_1 \nu(\Omega_e(\mu_1))+p_2 \nu(\Omega_e(\mu_2)). 
\end{align}
If in addition $\mu_2$ is related to $\mu_1$ by an orthogonal transformation (rotation or reflection) on the unit sphere, then $\nu(\Omega_e(\mu_1))=\nu(\Omega_e(\mu_2))$. The two properties are summarized in the following 
 proposition, which  is very helpful to studying the spectral gap of $\Omega_e(\mu)$. 
\begin{proposition}\label{pro:numu}
	The spectral gap $\nu(\Omega_e(\mu))$ is concave in $\mu$ and invariant under orthogonal transformations  on the unit sphere. 
\end{proposition}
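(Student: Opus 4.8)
The plan is to treat the two assertions separately, each reducing to an elementary property of the operator norm. For concavity I would exploit that the bond verification operator depends \emph{linearly} on the distribution: since $\Omega_e(\mu)=\int R_{e,\vec{r}}\,\rmd\mu(\vec{r})$, writing $\mu=p\mu_1+(1-p)\mu_2$ gives $\Omega_e(\mu)=p\,\Omega_e(\mu_1)+(1-p)\,\Omega_e(\mu_2)$. The map $\Omega\mapsto\bar\Omega=(1-Q_e)\Omega(1-Q_e)$ is also linear, so $\bar\Omega_e(\mu)=p\,\bar\Omega_e(\mu_1)+(1-p)\,\bar\Omega_e(\mu_2)$. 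The triangle inequality for the operator norm then yields $\beta(\Omega_e(\mu))=\|\bar\Omega_e(\mu)\|\le p\,\|\bar\Omega_e(\mu_1)\|+(1-p)\,\|\bar\Omega_e(\mu_2)\|$, that is, $\beta$ is convex in $\mu$; since $\nu=1-\beta$, this is exactly the claimed concavity.

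For invariance the strategy is to show that each orthogonal transformation is implemented by a unitary conjugation that fixes $Q_e$, under which the operator norm is preserved. The first step is to reduce to proper rotations: because $R_{e,\vec{r}}=R_{e,-\vec{r}}$, the antipodal map $\vec{r}\mapsto-\vec{r}$ acts trivially on every test projector, so an improper transformation $O$ may be replaced by the rotation $-O\in\mathrm{SO}(3)$ without altering any $R_{e,\vec{r}}$. It therefore suffices to treat a rotation $R\in\mathrm{SO}(3)$. The second step is to introduce the implementing unitary $U_R=D^{(S_j)}(R)\otimes D^{(S_k)}(R)$, the tensor product of the spin-$S_j$ and spin-$S_k$ representations of $R$ on the two nodes. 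By the standard rotation property $U_R S_{j,\vec{r}} U_R^\dagger=S_{j,R\vec{r}}$ (and likewise for $k$), the maximal and minimal eigenstates obey $U_R|\pm\rangle_{j,\vec{r}}=|\pm\rangle_{j,R\vec{r}}$ up to phase, whence $U_R|\pm\rangle_{e,\vec{r}}=|\pm\rangle_{e,R\vec{r}}$ up to phase, and consequently $U_R R_{e,\vec{r}} U_R^\dagger=R_{e,R\vec{r}}$, the harmless phases cancelling in the rank-one projectors.

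The final step is to assemble these pieces. If $\mu_2$ is the pushforward of $\mu_1$ under the orthogonal transformation (equivalently, under the rotation $R$), a change of variables gives
\begin{align}
\Omega_e(\mu_2)=\int R_{e,R\vec{r}}\,\rmd\mu_1(\vec{r})=U_R\!\left(\int R_{e,\vec{r}}\,\rmd\mu_1(\vec{r})\right)\!U_R^\dagger=U_R\,\Omega_e(\mu_1)\,U_R^\dagger.
\end{align}
Because the total-spin-$S_e$ subspace is rotationally invariant, $U_R$ commutes with $Q_e$, so $\bar\Omega_e(\mu_2)=U_R\,\bar\Omega_e(\mu_1)\,U_R^\dagger$ and the operator norm is unchanged: $\beta(\Omega_e(\mu_2))=\beta(\Omega_e(\mu_1))$, hence $\nu(\Omega_e(\mu_2))=\nu(\Omega_e(\mu_1))$. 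I expect the main obstacle to be the correct handling of improper orthogonal transformations, and the crux is the observation $R_{e,\vec{r}}=R_{e,-\vec{r}}$, which lets the antipodal symmetry of the tests absorb the determinant-$(-1)$ part so that only genuine spin rotations—each implemented by an honest unitary—need to be tracked.
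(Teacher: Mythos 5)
Your proof is correct and follows exactly the route the paper indicates (and leaves as "straightforward to verify"): linearity of $\mu\mapsto\Omega_e(\mu)$ plus the triangle inequality for $\|\Omega_e(\mu)-Q_e\|$ gives concavity of $\nu$, and unitary covariance $U_R R_{e,\vec{r}}U_R^\dagger=R_{e,R\vec{r}}$ together with $[U_R,Q_e]=0$ gives rotational invariance. Your explicit reduction of improper orthogonal transformations to rotations via $R_{e,\vec{r}}=R_{e,-\vec{r}}$ is a detail the paper glosses over but is exactly the right way to cover the "or reflection" case.
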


By \pref{pro:numu}, the spectral gap $\nu(\Omega_e(\mu))$ is maximized when 
 $\mu$ is the isotropic distribution, which leads to the \emph{isotropic protocol}; the resulting verification operator is denoted by $\Omega_e^{\textup{iso}}$. By construction $\Omega_e^{\textup{iso}}$ is invariant under orthogonal transformations, so it is block diagonal with respect to the spin subspaces associated with total spins $|S_j-S_k|,\, |S_j-S_k|+1, \ldots, S_j+S_k=S_e$,  respectively, and it can be expressed as a weighted sum of projectors onto these spin subspaces. Meanwhile, $\Omega_e^{\textup{iso}}$ satisfies the conditions $0\leq \Omega_e^{\textup{iso}}\leq 1$ and $\Omega_e^{\textup{iso}}Q_e=Q_e$, where $Q_e=1-P_e$ and $P_e$ is the projector onto the subspace associated with
 the maximum total spin $S_e$. Note that $P_e$ and $Q_e$ have ranks $2S_e+1$ and $(2S_j+1)(2S_k+1)-(2S_e+1)=4S_jS_k$, respectively. In conjunction with \eref{eq:OmegaemuTr} we can now deduce that
 $\Omega_e^{\textup{iso}}$ is homogeneous and has the form
 \begin{equation}\label{eq:Omegaiso}
\Omega_e^{\textup{iso}}=Q_e+\frac{2S_e-1}{2S_e+1}P_e,
 \end{equation}
 which means $\nu(\Omega_e^{\textup{iso}})=2/(2S_e+1)$.
The spectral gap of any other verification operator $\Omega(\mu)$ based on spin measurements satisfies $\nu(\Omega_e(\mu))\leq \nu(\Omega_e^{\textup{iso}})=2/(2S_e+1)$.

Optimal bond verification protocols can also be constructed from discrete distributions  based on (spherical) $t$-designs, which are more appealing to practical applications. 
Given a positive integer $t$,  a probability distribution on the unit sphere is a  $t$-design if the average of any polynomial of degree at most $t$ equals the average over the isotropic distribution \cite{DelsGS77,Seid01,BannB09}.  The following theorem offers a general recipe for constructing optimal bond verification protocols, which can be proved by virtue of  the theories of $t$-designs \cite{DelsGS77,Seid01,BannB09} and spin coherent states~\cite{ZhanFG90}, as shown in \sref{sec:thm:homoDesignGenProof}.
 \begin{thm}\label{thm:homoDesignGen}
 	Let $\mu$ be a probability distribution on the unit sphere and let $\mu_\sym$ be the average  of $\mu$ and its center inversion. Then the four statements below are equivalent. 
 	\begin{enumerate}
 		
 		\item $\nu(\Omega_e(\mu))=\frac{2}{2S_e+1}$.
 		
 		\item $\Omega_e(\mu)=Q_e+\frac{2S_e-1}{2S_e+1}P_e$.
 		
 		\item $\Omega_e(\mu)$ is homogeneous. 
 		
 		\item $\mu_\sym$ forms a spherical $t$-design with $t=2S_e$. 
 	\end{enumerate}
 \end{thm}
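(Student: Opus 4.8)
The plan is to collapse the four-way equivalence onto a single statement about a positive operator living on $\mathrm{range}(P_e)$, and then to link that statement to $t$-designs through spin coherent states. First I would rewrite the bond verification operator as $\Omega_e(\mu)=1-\rho(\mu)$, where $\rho(\mu)=\int\bigl(|+\>_{e,\vec r}\<+|+|-\>_{e,\vec r}\<-|\bigr)\,\rmd\mu(\vec r)$. The structural key is that $|+\>_{e,\vec r}=|+\>_{j,\vec r}\otimes|+\>_{k,\vec r}$ is a product of two \emph{aligned} highest-weight states, hence equals the spin-$S_e$ coherent state $|\vec r\>_e$, which lies entirely in $\mathrm{range}(P_e)$; similarly $|-\>_{e,\vec r}=|-\vec r\>_e$. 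Consequently $\rho(\mu)$ is positive, supported on $\mathrm{range}(P_e)$, and has trace $2$ by \eref{eq:OmegaemuTr}, and it can be written as $\rho(\mu)=2\Sigma(\mu_\sym)$ with $\Sigma(\nu)=\int|\vec r\>_e\<\vec r|\,\rmd\nu(\vec r)$, because the pair $|\vec r\>_e,|-\vec r\>_e$ exactly symmetrizes $\mu$ under center inversion. Since the target subspace is $\mathrm{range}(Q_e)$, the relevant reduced operator is $\bar\Omega_e(\mu)=P_e\Omega_e(\mu)P_e=P_e-\rho(\mu)$, so $\nu(\Omega_e(\mu))=1-\|P_e-\rho(\mu)\|$.

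Next I would prove $(1)\Leftrightarrow(2)\Leftrightarrow(3)$ from this reduction alone. Let $\lambda_1,\dots,\lambda_{2S_e+1}\ge 0$ be the eigenvalues of $\rho(\mu)$ on $\mathrm{range}(P_e)$, so that $\sum_i\lambda_i=2$ and $\|P_e-\rho(\mu)\|=\max_i|1-\lambda_i|$. A short minimax computation shows that over the simplex $\{\lambda_i\ge 0,\ \sum_i\lambda_i=2\}$ the quantity $\max_i|1-\lambda_i|$ attains its minimum $\tfrac{2S_e-1}{2S_e+1}$ \emph{only} at the uniform point $\lambda_i\equiv\tfrac{2}{2S_e+1}$; hence $\nu(\Omega_e(\mu))=\tfrac{2}{2S_e+1}$ iff $\rho(\mu)=\tfrac{2}{2S_e+1}P_e$ iff $\Omega_e(\mu)=Q_e+\tfrac{2S_e-1}{2S_e+1}P_e$, which is $(1)\Leftrightarrow(2)$. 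The implication $(2)\Rightarrow(3)$ is immediate, and for $(3)\Rightarrow(2)$ I would observe that homogeneity forces $\rho(\mu)=(1-\lambda)P_e$ for some $\lambda$, whereupon the fixed trace $\tr\rho(\mu)=2$ pins down $1-\lambda=\tfrac{2}{2S_e+1}$.

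It then remains to show $(2)\Leftrightarrow(4)$, equivalently $\Sigma(\mu_\sym)=\tfrac{1}{2S_e+1}P_e\Leftrightarrow\mu_\sym$ is a spherical $2S_e$-design. Here I would invoke the multipole decomposition of the coherent-state projector: on the spin-$S_e$ irrep, $|\vec r\>_e\<\vec r|$ expands into irreducible tensor operators of ranks $L=0,1,\dots,2S_e$, the rank-$L$ part being carried by the degree-$L$ spherical harmonics $Y_{LM}(\vec r)$ and appearing with nonzero weight for every such $L$. Integrating against $\nu=\mu_\sym$, the rank-$0$ part always returns $\tfrac{1}{2S_e+1}P_e$, so $\Sigma(\mu_\sym)=\tfrac{1}{2S_e+1}P_e$ holds iff every moment $\int Y_{LM}\,\rmd\mu_\sym$ vanishes for $1\le L\le 2S_e$, which is exactly the defining condition for $\mu_\sym$ to be a $2S_e$-design; the odd-$L$ moments vanish automatically by inversion symmetry of $\mu_\sym$, consistent with this condition. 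Chaining the three equivalences through $(2)$ closes the loop.

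The main obstacle is the coherent-state input feeding $(2)\Leftrightarrow(4)$: I must justify that $|+\>_{j,\vec r}\otimes|+\>_{k,\vec r}$ really is the spin-$S_e$ coherent state $|\vec r\>_e$ (the aligned highest-weight states combine into the top-spin sector, which follows from decomposing the tensor product of two $SU(2)$ irreps together with rotational covariance), and that the multipole expansion of $|\vec r\>_e\<\vec r|$ genuinely terminates at rank $2S_e$ with all ranks present. Both are standard facts about spin coherent states \cite{ZhanFG90}, but stating them cleanly, and matching the bookkeeping ``all moments with $1\le L\le 2S_e$ vanish'' to the $t$-design condition at $t=2S_e$, is where the care lies; the remaining steps are elementary linear algebra.
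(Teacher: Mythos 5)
Your proposal is correct, and its skeleton coincides with the paper's for the equivalence of statements 1, 2, and 3: both arguments reduce everything to the positive operator $O=\Omega_e(\mu)-Q_e=P_e-\rho(\mu)$ supported on the rank-$(2S_e+1)$ range of $P_e$ with fixed trace $2S_e-1$, and your simplex minimax is just a rephrasing of the paper's observation that $\|O\|=\tr(O)/(2S_e+1)$ forces $O\propto P_e$. Where you genuinely diverge is the leg $2\Leftrightarrow 4$. The paper (\lref{lem:trOmegamu2LB} via \lref{lem:ErEsOverlap}) runs a scalar second-moment argument: it computes $\tr(O^2)$ from the coherent-state overlap $|\<\vec{r}|\vec{s}\>|^2=[(1+\vec{r}\cdot\vec{s})/2]^{2S_e}$, expands in even powers of $\vec{r}\cdot\vec{s}$ to get a positive combination of frame potentials $F_{2j}(\mu_\sym)$, and invokes $F_{2j}\geq 1/(2j+1)$ with saturation iff $\mu_\sym$ is a $2j$-design; positivity of $O$ plus the trace constraint then upgrades saturation of the purity bound to $O\propto P_e$. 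You instead work at the operator level, expanding $|\vec{r}\>_e\<\vec{r}|$ in irreducible tensor operators of ranks $L=0,\dots,2S_e$ and matching spherical-harmonic moments of $\mu_\sym$ directly. Both are sound; your route gives a cleaner structural picture of $\Sigma(\mu_\sym)$ for arbitrary $\mu$, but it leans on the fact that every multipole rank $L\leq 2S_e$ appears with \emph{nonzero} coefficient in the coherent-state projector (without which ``$\Sigma\propto P_e$'' would not force the rank-$L$ moments to vanish); you correctly flag this, and it plays the same role as the strict positivity of the binomial coefficients $\binom{2S_e}{2j}$ in the paper's expansion. The paper's version trades that representation-theoretic input for the elementary frame-potential inequality, at the cost of needing the positivity-plus-trace argument a second time to pass from the saturated purity to statement 2.
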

Here $\mu_\sym$  is center symmetric by construction, so  $\mu_\sym$  is a ($2S_e$)-design iff it is a ($2\lfloor S_e\rfloor$)-design.
Note that  $t$-designs for the two-dimensional sphere can be constructed using $O(t^2)$ points \cite{BondRV13,Wome17}, so optimal bond verification protocols can be constructed using $O(S_e^2)$ tests based on spin measurements. For example, the uniform distributions on the vertices of the regular tetrahedron, octahedron, cube, icosahedron, and dodecahedron are  $t$-designs with $t=2, 3, 3, 5, 5$, respectively. 
A $7$-design can be constructed from certain orbit of the rotational symmetry group of the cube \cite{ZhuKGG16,ChenLZ23}.  
A 9-design can be constructed from a suitable combination of the icosahedron and dodecahedron  \cite{HughW20,ChenLZ23}.

\begin{figure}
	\centering
	\hspace{-1ex}\includegraphics[width=0.45\textwidth]{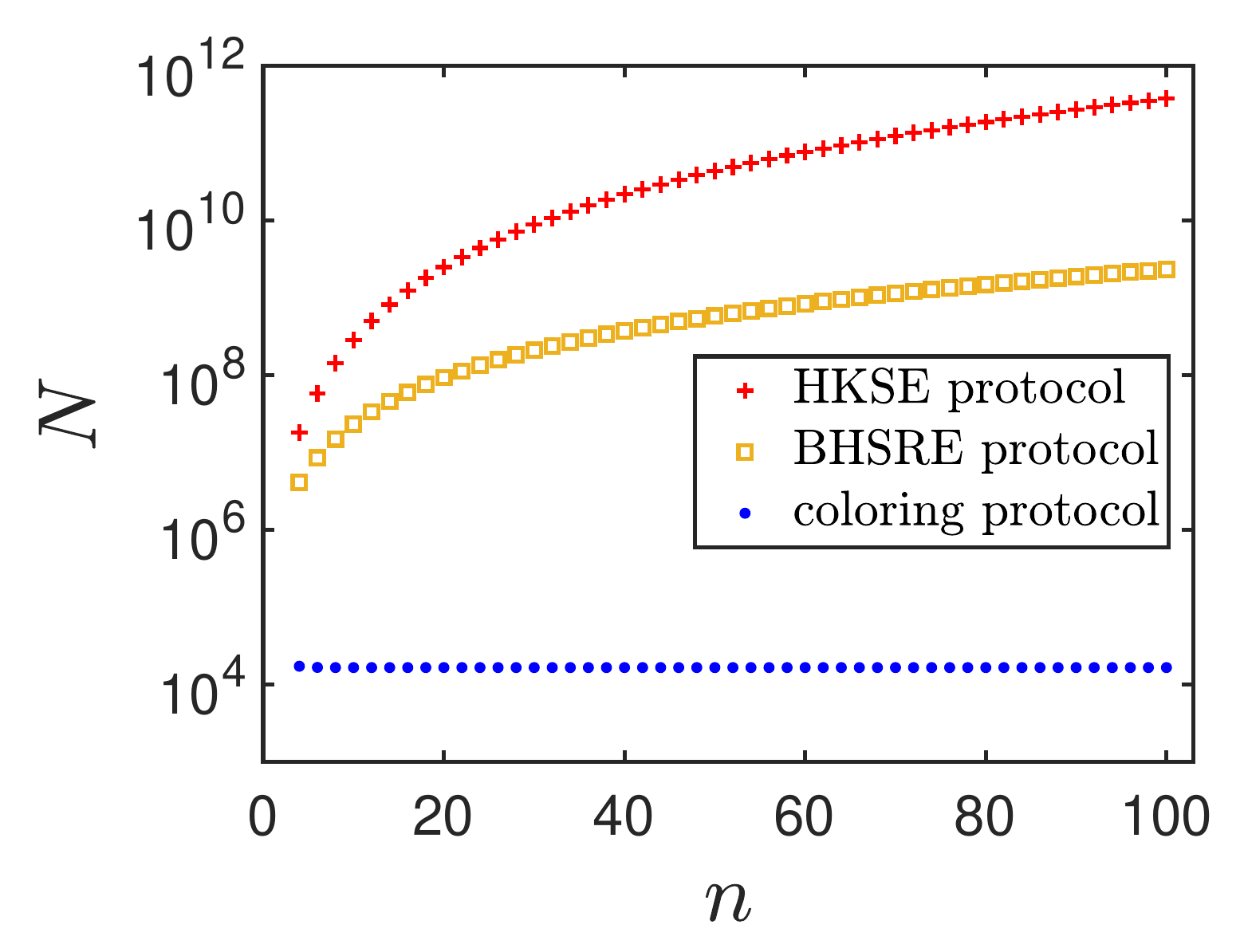}		
	\caption{\label{fig:TestNumCC}
Comparison of sample costs in the verification of	the AKLT state on the even closed chain with $n$ nodes within precision $\epsilon=\delta=0.01$. Here the  coloring protocol is a special  matching protocol based on the optimal edge coloring, and the  number of tests is determined by the first upper bound in \eref{eq:TestNumUB} with $m=g=2$, $s=1/2$, and $\nu_E=2/5$.  The HKSE protocol is proposed in \rscite{HangKSE17}, and the number of tests is determined by \eref{eq:HKSE} with $|E|=n$. The BHSRE protocol is proposed in  \rcite{BermHSR18}, and the number of tests is determined by the lower bound in \eref{eq:BHSRElb} with $\kappa=2$.
	}	
\end{figure}

For simplicity we can choose the same distribution $\mu$ for each bond verification protocol (although this is not compulsory). Let $\scrM$ be a matching cover of $G=(V,E)$ that is composed of  $m$ matchings  and let $p$ be a probability distribution on $\scrM$  (which can be omitted for the uniform distribution). Then the triple $(\mu, \scrM,p)$ specifies a verification protocol for the AKLT state $|\Psi_G\>$. Suppose $\mu$ forms a $t$-design with $t=2S_E$, $\scrM$ is an optimal matching cover (or edge coloring) with $|\scrM|=\chi'(G)$, and $p$ is uniform. 
By \thref{thm:SpectralGap} with $m=\chi'(G)\leq \Delta(G)+1$, $g=2S_E-2$, and $\nu_E=2/(2S_E+1)$,  the spectral gap of the resulting verification operator $\Omega(\mu,\scrM)$ satisfies
\begin{align}
\nu(\Omega(\mu,\scrM))&\geq \frac{\gamma}{24\Delta(G)^4},\label{eq:SpectralGapAKLT}
\end{align}
given that $S_E\leq \Delta(G)$. 
So the number of tests required to verify the AKLT state $|\Psi_G\>$ within infidelity $\epsilon$ and significance level $\delta$ satisfies 
\begin{align}
N\leq\biggl\lceil \frac{24\Delta(G)^4\ln(\delta^{-1})}{\gamma\epsilon}\biggr\rceil,
\end{align}
which leads to an upper bound that is  independent of the system size when $\gamma$ is bounded from below by a positive constant and $\Delta(G)$ is bounded from above by an integer. Notably, this is the case for various 1D and 2D lattices \cite{AfflKLT87,AfflKLT88,GarcMW13,AbduLLN20,PomaW19,PomaW20,LemmSW20,GuoPW21,WeiRA22}.

The efficiency of our approach is illustrated in \fref{fig:TestNumCC}. 
To verify the AKLT state on the closed chain with 100 nodes within precision $\epsilon=\delta=0.01$, only $1.66\times 10^4$ tests are required. For the honeycomb lattice with 100 nodes, only  $7.9\times 10^5$ tests are required. By contrast, all protocols known previously would require tens of thousands of times more tests as explained in \aref{app:Comparison}.

When the degree $\Delta(G)$ of $G$ is large (compared with $\sqrt{n}$\lsp), \thref{thm:SpectralGap2} may offer  better bounds for the spectral gap $\nu(\Omega(\mu,\scrM,p))$  and the number of tests. Suppose  $\nu_E=2/(2S_E+1)$ and $p=(|M_1|, |M_2|, \ldots, |M_m|)/|E|$; then 
\begin{align}
\!\!\nu(\Omega(\mu,\scrM,p))&\geq \frac{4\gamma}{n\Delta(G)[2\Delta(G)+1]}\geq\frac{2\gamma}{n^3},\\
N&\leq\biggl\lceil \frac{n^3\ln(\delta^{-1})}{2\gamma\epsilon}\biggr\rceil,
\end{align}
 given that $S_E\leq \Delta(G)\leq n-1$.

\subsection{\label{sec:thm:homoDesignGenProof}Proof of  \thref{thm:homoDesignGen}}
\begin{proof}[Proof of \thref{thm:homoDesignGen}]
	Let 
	\begin{align}
	O=P_e\Omega_e(\mu)P_e=\Omega_e(\mu)-Q_e;
	\end{align}
	then $O$ is a positive operator supported in the range of the  projector $P_e$,  which has rank $2S_e+1$,  and we have
	\begin{align}
	\|O\|=1-\nu(\Omega_e(\mu)). 
	\end{align}	
	In addition, 
	\begin{align}
	O&=\int \rmd\mu(\vec{r})(R_{e,\vec{r}}-Q_e)\nonumber\\
	&=\int \rmd\mu(\vec{r})(P_e-|+\>_{e,\bm{r}}\<+|-|-\>_{e,\bm{r}}\<-|),  \label{eq:O2}
	\end{align}
	which implies that  
	\begin{align}
	\tr(O)=2S_e-1.
	\end{align}	
	Suppose $\nu(\Omega_e(\mu))=2/(2S_e+1)$; then
	\begin{equation}
	\|O\|=\frac{2S_e-1}{2S_e+1}=\frac{\tr(O)}{2S_e+1}. 
	\end{equation}
	This equation implies that $O$ has rank $2S_e+1$ and  all nonzero eigenvalues are equal given that $O$ is a positive operator supported in the range of the  projector $P_e$,  which has rank $2S_e+1$. So  $O$ is necessarily  proportional to $P_e$. In conjunction with the trace formula derived above we can deduce that
	\begin{align}
	O=\frac{2S_e-1}{2S_e+1} P_e, \label{eq:MP_e}
	\end{align}	
	which confirms the implication $1\imply 2$.  The implication  $2\imply 3$	 is obvious.

	Suppose $\Omega_e(\mu)$ is homogeneous; then it has  the form 
	\begin{equation}
	\Omega_e(\mu)=Q_e+\lambda P_e,
	\end{equation}	
	which means $O=\lambda P_e$. In addition,
	\begin{align}
	\lambda=\frac{\tr(O)}{\tr(P_e)}=\frac{2S_e-1}{2S_e+1},\quad  \nu(\Omega_e(\mu))=\frac{2}{2S_e+1},
	\end{align}
	which confirms the implication $3\imply 1$. So statements 1, 2, 3 are equivalent.

	To complete the proof of \thref{thm:homoDesignGen}, it suffices to  prove the equivalence of statements 2 and 4. If statement 2 holds, then \eref{eq:MP_e} holds, which implies that
	\begin{align}
	\tr\bigl(O^2\bigr)=\frac{(2S_e-1)^2}{2S_e+1}. 
	\end{align}
	So the distribution $\mu_\sym$ forms a spherical $t$-design with $t=2S_e$ according to \lref{lem:trOmegamu2LB} below, which confirms the implication $2\imply 4$.

	If $\mu_\sym$ is a spherical $t$-design with $t=2S_e$,  then
	\begin{align}
	\tr\bigl(O^2\bigr)=\frac{(2S_e-1)^2}{2S_e+1}=\frac{1}{2S_e+1}[\tr(O)]^2
	\end{align}
	by \lref{lem:trOmegamu2LB}. Since $O$ is a positive operator supported in the range of the  projector $P_e$, which has rank $2S_e+1$, the above equation implies \eref{eq:MP_e}, and thereby confirming the implication $4\imply 2$ and completing the proof of \thref{thm:homoDesignGen}. 
\end{proof}

The following technical lemma employed in the proof of \thref{thm:homoDesignGen} is proved in \aref{app:lem:trOmegamu2LBproof}
\begin{lem}\label{lem:trOmegamu2LB}
	Let $\mu$ be a probability distribution on the unit sphere and $\Omega_e(\mu)$ the bond verification operator based on $\mu$. Then 
	\begin{align}
	\tr[\Omega_e(\mu)-Q_e]^2\geq\frac{(2S_e-1)^2}{2S_e+1},  \label{eq:trOmegamu2LB}
	\end{align}	
	and the inequality is saturated iff $\mu_\sym$ is a $t$-design with $t=2S_e$. 
\end{lem}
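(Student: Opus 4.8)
The plan is to compute $\tr[\Omega_e(\mu)-Q_e]^2 = \tr(O^2)$ explicitly by exploiting the representation-theoretic structure of the operator $O=\int R_{e,\vec{r}}-Q_e \,\rmd\mu(\vec{r})=\int(|+\>_{e,\vec{r}}\<+|+|-\>_{e,\vec{r}}\<-|)\,\rmd\mu(\vec{r})$, and then read off the equality condition as a $t$-design condition. The key observation is that $|+\>_{e,\vec{r}}=|+\>_{j,\vec{r}}\otimes|+\>_{k,\vec{r}}$ is a tensor product of spin coherent states, and the projector onto the top-weight state of a spin-$S_j$ representation along direction $\vec{r}$ is itself a coherent-state projector whose entries are degree-$2S_j$ homogeneous polynomials in the components of $\vec r$. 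Consequently, each matrix element of $|+\>_{e,\vec{r}}\<+|$ is a polynomial of degree at most $2S_e=2S_j+2S_k$ in $\vec r$, and the same holds for $|-\>_{e,\vec{r}}\<-|$ since $|-\>_{\vec r}=|+\>_{-\vec r}$.

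First I would set up coordinates on the spin-$S_e$ subspace (the range of $P_e$, which carries the irreducible top-spin representation) and express $O$ there. The square $\tr(O^2)$ expands into integrals of the form $\int\!\int |\<+_{\vec r}|+_{\vec r'}\>|^2\,\rmd\mu(\vec r)\rmd\mu(\vec r')$ together with the cross terms involving $|-\>$; using $R_{e,-\vec r}=R_{e,\vec r}$ and the center-inversion symmetrization, all of these collapse into integrals of polynomials of degree at most $2t=4S_e$ — no, more carefully, $\tr(O^2)$ involves a product of two degree-$2S_e$ polynomials averaged against $\mu\otimes\mu$, so after symmetrization it is governed by moments of $\mu_\sym$ up to degree $2S_e$. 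The clean way to organize this is to project onto the spin-$S_e$ block and use the fact that $P_e$ there is the identity of rank $2S_e+1$; the coherent-state overlaps give closed-form powers $|\<+_{\vec r}|+_{\vec r'}\>|^2=(\tfrac{1+\vec r\cdot\vec r'}{2})^{2S_e}$, which is exactly a degree-$2S_e$ zonal polynomial in $\vec r\cdot\vec r'$.

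The heart of the argument is the Cauchy–Schwarz (or equivalently, "design $=$ variance-minimizer") step: for any positive operator $O$ supported on a rank-$d$ projector $P_e$ with fixed trace, $\tr(O^2)\geq [\tr(O)]^2/d$ with equality iff $O\propto P_e$, and here $d=2S_e+1$, $\tr(O)=2S_e-1$. I would show that $\tr(O^2)$ equals this minimal value precisely when the average of $|+_{\vec r}\>\<+_{\vec r}|+|-_{\vec r}\>\<-_{\vec r}|$ over $\mu$ reproduces its isotropic average on the spin-$S_e$ block, and that reproducing the isotropic average of these degree-$2S_e$ coherent-state projectors is equivalent to $\mu_\sym$ integrating every degree-$2S_e$ polynomial correctly — i.e.\ to $\mu_\sym$ being a spherical $t$-design with $t=2S_e$. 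The equivalence "matching all moments up to degree $t$ $\equi$ $t$-design" is the defining property of designs invoked in the text, so this direction is essentially bookkeeping once the polynomial degrees are pinned down.

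The main obstacle I anticipate is the degree-counting on the equality side: I must verify that the relevant integrand that $\mu_\sym$ is required to reproduce genuinely spans all polynomials of degree up to $2S_e$ on the sphere (so that matching it forces the full $t$-design property, not merely some sub-collection of moments), and conversely that no moment of degree exceeding $2S_e$ enters, so that a $(2S_e)$-design suffices. This requires knowing that the coherent-state projector $|+_{\vec r}\>\<+_{\vec r}|$ for spin $S_e$, regarded as a matrix-valued function of $\vec r$, has matrix entries that together span exactly the degree-$\leq 2S_e$ symmetric polynomials — a standard but crucial fact from the theory of spin coherent states and the symmetric-subspace (Schur–Weyl) picture, which is why familiarity with spin coherent states~\cite{ZhanFG90} is flagged as helpful. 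The center-inversion symmetrization to $\mu_\sym$ handles the parity mismatch (since $R_{e,\vec r}$ is even under $\vec r\mapsto-\vec r$, only even-parity information of $\mu$ matters, consistent with the remark that a $(2S_e)$-design condition reduces to a $(2\lfloor S_e\rfloor)$-design condition).
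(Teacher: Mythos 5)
Your proposal is correct, and it handles the inequality exactly as the paper does: $O=\Omega_e(\mu)-Q_e$ is positive, supported on the rank-$(2S_e+1)$ projector $P_e$, and has $\tr(O)=2S_e-1$, so $\tr(O^2)\geq[\tr(O)]^2/(2S_e+1)$ with equality iff $O\propto P_e$. Where you diverge is the equality characterization. The paper stays at the scalar level: it computes $\tr(O^2)$ in closed form via the coherent-state overlap $|{}_{e,\vec{r}}\<+|+\>_{e,\vec{s}}|^2=\bigl(\frac{1+\vec{r}\cdot\vec{s}}{2}\bigr)^{2S_e}$ (\lref{lem:ErEsOverlap}), expands in powers of $\vec{r}\cdot\vec{s}$ to express $\tr(O^2)$ as a positive combination of the even frame potentials $F_{2j}(\mu)$ for $j\leq\lfloor S_e\rfloor$, and then invokes the frame-potential bound $F_{2j}(\mu_\sym)\geq 1/(2j+1)$, saturated exactly on designs; a binomial identity shows the minimal value is $(2S_e-1)^2/(2S_e+1)$. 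You instead work at the operator level: $O\propto P_e$ iff the $\mu$-average of $|+\>_{e,\vec{r}}\<+|+|-\>_{e,\vec{r}}\<-|$ reproduces its isotropic average on the spin-$S_e$ block, which you reduce to the vanishing of all even multipole moments of $\mu_\sym$ up to degree $2S_e$ via the spherical-tensor expansion of the coherent-state projector. This works, and you correctly flag the one fact your route needs that the paper's does not: that every multipole component $L=0,1,\ldots,2S_e$ of $|+\>_{e,\vec{r}}\<+|$ appears with a nonzero coefficient (the relevant Clebsch--Gordan coefficients are nonzero), so that matching the isotropic average genuinely forces the full $(2S_e)$-design condition rather than a sub-collection of moments. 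The trade-off is that the paper's computation is fully self-contained (everything reduces to the overlap formula, a binomial sum, and the standard frame-potential inequality), whereas your argument is more conceptual and avoids the explicit $\tr(O^2)$ formula, at the cost of having to establish the spanning property of coherent-state matrix entries; your handling of the parity issue via $\mu_\sym$ matches the paper's.
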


\section{\label{sec:sum}Summary}
We  proposed a general  recipe for verifying the ground states of frustration-free Hamiltonians based on local measurements. We also provided  rigorous bounds on the sample cost required to achieve a given precision by virtue of the spectral gap of the underlying Hamiltonian and simple graph theoretic quantities, as presented in \thsref{thm:SpectralGap} and \ref{thm:SpectralGap2}. The sample complexity achieved by our recipe is optimal with respect to the target precision as quantified by the infidelity and significance level. 
 When the Hamiltonian is local and gapped in the thermodynamic limit, the sample complexity is independent of the system size and is thus optimal with respect to the system size as well. Nevertheless, we believe that the stronger detectability lemma (\lref{lem:DL}) we proved can be further improved by a constant factor. Accordingly, the constant in \thref{thm:SpectralGap} can be further improved.
In any case, our approach can achieve much better scaling behaviors with respect to the system size, spectral gap, and  infidelity compared with alternative approaches known before.

 To demonstrate the power of this recipe,  we constructed concrete protocols for verifying AKLT states defined on arbitrary graphs based on local spin measurements, which are dramatically more efficient than previous protocols. For AKLT states defined on many lattices, including the 1D chain and honeycomb lattice, the sample cost does not increase with the system size. 
Our work reveals an intimate connection between the quantum verification problem and many-body physics. The protocols we constructed are useful not only to addressing various tasks in quantum information processing, but also to  studying  many-body physics.

The verification strategy considered in this work can make a meaningful conclusion only if all tests are passed. In practice, it is unrealistic to prepare the perfect target state, and even  states with a high fidelity may fail to pass  some tests when the total test number $N$ is large. In addition, imperfection in the measurement devices may also cause some failures.
To construct a robust verification protocol, it is necessary to allow certain failure rate. Fortunately, this extension will only incur 
a constant overhead~\cite{LiZH23}. For a typical choice of the allowed failure rate, the overhead is about ten times.  So all of our conclusions are still applicable after minor modification even if robustness is taken into account. In addition, by virtue of the recipe proposed in \rscite{ZhuH19AdS,ZhuH19AdL,LiZH23}, our protocols can be generalized to the adversarial scenario in which the preparation device cannot be trusted; moreover, the sample overhead is negligible if robustness is not a concern or if homogeneous strategies can be constructed. In general, it is still an open problem to construct robust and efficient verification protocols in the adversarial scenario, which deserves further study.

Our verification protocols can only extract information about the fidelity with the target state. Nevertheless, this key information is very useful in many applications in quantum information processing. One of the main goals in the active research area of quantum characterization, verification, and validation (QCVV) \cite{EiseHWR20,CarrEKK21,KlieR21,YuSG22,MorrSGD22} is to extract such key information as efficiently as possible given the limited resources available, which is the best we can do.  Even if this key information is not enough in certain situations, it is still valuable as a first-step diagnosis before taking more sophisticated methods. To extract more information necessarily means more resource overhead. Actually, other characterization methods, such as quantum tomography, usually require substantially (often exponentially) more resources.

\acknowledgments
H. Zhu is grateful to Zheng Yan and Penghui Yao for stimulating discussions. The authors are grateful to two anonymous reviewers for many constructive comments and suggestions. This work is  supported by   the National Natural Science Foundation of China (Grants No.~92165109 and No.~11875110),  National Key Research and Development Program of China (Grant No. 2022YFA1404204), and Shanghai Municipal Science and Technology Major Project (Grant No.~2019SHZDZX01).


\appendix
\numberwithin{equation}{section}
\renewcommand{\theequation}{\thesection\arabic{equation}}

\section*{Appendix}

In this appendix  we first prove three auxiliary lemmas employed in the main text, namely \lsref{lem:PQUB}, \ref{lem:GapSumProd}, and~\ref{lem:trOmegamu2LB}. Then we briefly discuss the connections with and distinctions from the companion paper \cite{ChenLZ23}. Finally, we compare our verification protocols with previous protocols known in the literature.

\section{\label{app:lem:PQUBproof}Proof of \lref{lem:PQUB}}

\begin{proof}[Proof of \lref{lem:PQUB}]
	When $\caH$ has dimension 0 or~1, the inequality in \eref{eq:PQUB} 	is trivial. In addition, it is easy to verify this  inequality 
	when one of the following four conditions holds, 
	\begin{enumerate}
		\item $P=0$ or $Q=0$;
		\item $P=1$ or $Q=1$;
		\item $P=Q$;
		\item $PQ=0$.
	\end{enumerate}
	So we can exclude these cases in the following discussion. Furthermore, we can assume that $|\psi\>$ is a normalized ket without loss of generality.
	
	Suppose $\caH$ has dimension 2, which is the simplest nontrivial case. In view of the above analysis, we can assume that  $P$ and $Q$ are distinct rank-1 projectors that are not orthogonal to each other. Then $P$ and $Q$ correspond to two distinct pure states, denoted by $|\alpha\>$ and $|\beta\>$ hence forth. Let $|\beta^\bot\>$ be the (normalized) ket that is orthogonal to $|\beta\>$. Let $\vec{a},\vec{b},\vec{c}$ be the Bloch vectors of  $|\alpha\>,|\beta^\bot\>,|\psi\>$, respectively. Let $\theta$ be the angle between $\vec{a}$ and $\vec{b}$ and let $\phi$ be the angle between $\vec{b}$ and $\vec{c}$, where $0< \theta<\pi$ and $0\leq \phi\leq \pi$, so that  $0< \theta+\phi<2\pi$.
	Then we have 
	\begin{gather}
	s=\|P Q\|=\sin\frac{\theta}{2},\\
\|P(1-Q)|\psi\> \|=|\<\alpha|\beta^{\bot}\>\<\beta^\bot|\psi\>|=\cos\frac{\theta}{2}\cos\frac{\phi}{2},\quad \\ 
	\|Q|\psi\> \|=|\<\beta|\psi\>|=\sin\frac{\phi}{2}.
	\end{gather}

According to the definitions of the vectors $\vec{a}, \vec{c}$ and angles $\theta,\phi$ introduced above it is easy to verify that
	\begin{align}
	\vec{a}\cdot \vec{c}\geq \cos(\theta +\phi),
	\end{align}
	which implies that 
	\begin{align}
&\|P|\psi\> \|=|\<\alpha|\psi\>|=\sqrt{\frac{1+\vec{a}\cdot \vec{c}}{2}}\nonumber\\
	&	
	\geq \sqrt{\frac{1+\cos(\theta +\phi)}{2}}=\biggl|\cos\frac{\theta +\phi}{2}\biggr|\geq \cos\frac{\theta +\phi}{2}.
	\end{align}
	Therefore,
	\begin{align}
	&\|P|\psi\>\|+s \|Q|\psi\>\|\geq \cos\frac{\theta +\phi}{2}+\sin\frac{\theta}{2}\sin\frac{\phi}{2}\nonumber\\
	&=\cos\frac{\theta}{2}\cos\frac{\phi}{2}=\|P(1-Q)|\psi\> \|,
	\end{align}
	which confirms the inequality in \eref{eq:PQUB}. 
	
	Now we are ready to consider the most general situation. Denote by $r_1$ and $r_2$ the ranks of $P$ and $Q$, respectively,  and let $r=\min\{r_1, r_2\}$. Then $P$  and $Q$ have spectral decompositions
	\begin{align}
	P=\sum_{j=1}^{r_1} |\alpha_j\>\<\alpha_j|, \quad Q=\sum_{k=1}^{r_2} |\beta_k\>\<\beta_k|,
	\end{align}
	which satisfy 
	\begin{align}
	\<\alpha_j|\beta_k\>=\tilde{s}_j\delta_{jk}, \quad 0\leq \tilde{s}_j\leq 1 \label{eq:sjtilde}
	\end{align}
	for $j=1,2,\ldots, r_1$ and $k=1,2,\ldots, r_2$.
	Without loss of generality, we can assume that $\tilde{s}_j=0$ if  $j> r$.

	Given $j=1,2,\ldots, r$, let $\caH_j$ be the subspace spanned by $|\alpha_j\>$ and $|\beta_j\>$ and let  
	\begin{align}
	\caH_0=(\caH_1+\caH_2+\cdots+\caH_r)^\bot. 
	\end{align}
	Then the subspaces $\caH_0, \caH_1, \ldots, \caH_r$ are mutually orthogonal. For $j=0,1,\ldots, r$, let $\Pi_j$ be the orthogonal projector onto $\caH_j$  and let  
	\begin{align}
	P_j=\Pi_j P\Pi_j,\quad  Q_j=\Pi_jQ\Pi_j,\quad
	|\psi_j\>=\Pi_j|\psi\>.
	\end{align}
	Then  $\|P_jQ_j\|=\tilde{s}_j$, where $\tilde{s}_j$ for $j=1,2,\ldots, r$ are introduced in \eref{eq:sjtilde}, while $\tilde{s}_0=0$ (note that $P_0Q_0=0$). 
	
	By virtue of the above analysis on the qubit and several special cases we can deduce that
	\begin{align}
	&\|P(1-Q)|\psi_j\> \|=\|P_j(1-Q_j)|\psi_j\> \|\nonumber\\
	&\leq \|P_j|\psi_j\>\|+s_j \|Q_j|\psi_j\>\|\nonumber\\
	&=\|P|\psi_j\>\|+s_j \|Q|\psi_j\>\|\leq \|P|\psi_j\>\|+s \|Q|\psi_j\>\|
	\end{align}
for $j=0,1,\ldots, r$,	where
	\begin{align}
	s_j=\begin{cases}
	\tilde{s}_j &\tilde{s}_j<1,\\
	0 & \tilde{s}_j=1.
	\end{cases}
	\end{align} 
	Note that $s=\max_{j=0}^r s_j$. 
	Therefore,
	\begin{align}
	&\|P(1-Q)|\psi\> \|^2 =\sum_{j=0}^r \|P(1-Q)|\psi_j\> \|^2\nonumber\\
	&\leq \sum_{j=0}^r (\|P|\psi_j\>\|+s \|Q|\psi_j\>\|)^2 \nonumber \\
	&=\sum_{j=0}^r \bigl(\|P|\psi_j\>\|^2 +s^2 \|Q|\psi_j\>\|^2\bigr) \nonumber\\
	&\quad +2s\sum_{j=0}^r \|P|\psi_j\>\| \|Q|\psi_j\>\| \nonumber \\
	&\leq \|P|\psi\>\|^2 +s^2 \|Q|\psi\>\|^2+2s\|P|\psi\>\| \|Q|\psi\>\|\nonumber\\
	&=(\|P|\psi\>\| +s\|Q|\psi\>\|)^2,
	\end{align}
	which implies \eref{eq:PQUB} and completes the proof of \lref{lem:PQUB}.
\end{proof}

\section{\label{app:lem:GapSumProdproof}Proof of \lref{lem:GapSumProd}}
\begin{proof}[Proof of \lref{lem:GapSumProd}]
	\Eref{eq:GapSumProd2} is a simple corollary of  Lemma~1 in  \rcite{LiHSS21}, so it remains to prove \eref{eq:GapSumProd}. 	
	Let $|\psi\>\in \caH$ be any normalized ket, and 
	\begin{align}
	x&=\|P_1P_2 \cdots P_m |\psi\>\|,\\ 
	y&=\sum_{j=1}^m \langle \psi|1-P_j|\psi\rangle=m-m\<\psi|O|\psi\>. 
	\end{align} 
	According to Theorem~1.3 in \rcite{DonnV22}, which is a generalization of the quantum union bound \cite{Gao15},	we have 
	\begin{align}
	x+\sqrt{1-x^2}\sqrt{y}\geq 1, 
	\end{align}
	which implies that
	\begin{align}
	y\geq \frac{(1-x)^2}{1-x^2}=\frac{1-x}{1+x}. 
	\end{align}
	Now choose $|\psi\>$ as an eigenvector of $O$ associated with the largest eigenvalue, that is,  $\<\psi|O|\psi\>=\|O\|$. Then the above equation implies that 
	\begin{align}
	m-m\|O\|&\geq \frac{1-\|P_1P_2\cdots P_m |\psi\> \|}{1+\|P_1P_2\cdots P_m |\psi\>\|}\nonumber\\
	&\geq \frac{1-\|P_1P_2\cdots P_m  \|}{1+\|P_1P_2\cdots P_m \|},
	\end{align}
	which in turn implies \eref{eq:GapSumProd}. 
\end{proof}

\section{\label{app:lem:trOmegamu2LBproof}Proof of \lref{lem:trOmegamu2LB}}

\begin{proof}[Proof of \lref{lem:trOmegamu2LB}]
As in the proof of  \thref{thm:homoDesignGen}, let
\begin{align}
O=\Omega_e(\mu)-Q_e=\int \rmd\mu(\vec{r})(R_{e,\vec{r}}-Q_e),
\end{align}	
where $R_{e,\vec{r}}$ is  defined in \eref{eq:Rer}.
	The inequality in \eref{eq:trOmegamu2LB} follows from the equality $\tr O=2S_e-1$ and the fact that $O$ is a positive operator supported in the range of the projector $P_e$, which has rank $2S_e+1$. 
	
	Now, by virtue of  \lref{lem:ErEsOverlap} below we can deduce that	
	\begin{align}
\tr (O^2) =2S_e-3+2^{2-2S_e}\sum_{j=0}^{\lfloor S_e\rfloor}\binom{2S_e}{2j}F_{2j}(\mu),
\end{align}	
where
	\begin{align}
	F_t(\mu):=\iint \rmd\mu(\vec{r})\rmd\mu(\vec{s}) (\bm{r}\cdot\bm{s})^t
	\end{align}
	is the $t$th frame potential of the distribution $\mu$. Note that $F_0(\mu)=1$ irrespective of the distribution $\mu$.  When $t$ is an even positive integer,  the frame potential $F_t$ satisfies the inequality \cite{Seid01,BannB09}
	\begin{align}
	F_t(\mu)=F_t(\mu_\sym)\geq \frac{1}{t+1},
	\end{align}
	which is saturated if $\mu_\sym$ forms a spherical $t$-design. Therefore,
	\begin{align}
	&\tr(O^2)\geq 2S_e-3+2^{2-2S_e}\sum_{j=0}^{\lfloor S_e\rfloor}\binom{2S_e}{2j}\frac{1}{2j+1}\nonumber\\
	&=2S_e-3+\frac{2^{2-2S_e}}{2S_e+1}\sum_{j=0}^{\lfloor S_e\rfloor}\binom{2S_e+1}{2j+1}\nonumber\\
	&=2S_e-3+\frac{4}{2S_e+1}=\frac{(2S_e-1)^2}{2S_e+1}, \label{eq:trOmegamu2LBProof}
	\end{align}
	which reproduces the inequality in \eref{eq:trOmegamu2LB}. Here the second equality follows from the identity below,
\begin{align}
\sum_{j=0}^{\lfloor S_e\rfloor}\binom{2S_e+1}{2j+1}=2^{2S_e}. 
\end{align}

	If the distribution $\mu_\sym$ forms a spherical $t$-design with $t=2S_e$, then 
	\begin{align}
	\!\! F_{2j}(\mu)=F_{2j}(\mu_\sym)=\frac{1}{2j+1}, \;\; j=0, 1, \ldots, \lfloor S_e\rfloor,   \label{eq:f2jmudesign}
	\end{align}
	so the inequality in \eref{eq:trOmegamu2LBProof} is saturated, and  the inequality in \eref{eq:trOmegamu2LB} is saturated accordingly.

	Conversely, if the inequality in \eref{eq:trOmegamu2LB} is saturated, then the inequality in \eref{eq:trOmegamu2LBProof} is saturated, so  \eref{eq:f2jmudesign} holds.  Therefore, $\mu_\sym$ forms a spherical $t$-design with $t=2S_e$ given that $\mu_\sym$ is center symmetric by construction. Note that   $\mu_\sym$  is a ($2S_e$)-design iff it is a ($2\lfloor S_e\rfloor$)-design.
\end{proof}

In the rest of this appendix we prove an auxiliary lemma employed in the proof of \lref{lem:trOmegamu2LB}.
\begin{lem}\label{lem:ErEsOverlap}
	Let  $R_{e,\vec{r}}$ and $R_{e,\vec{s}}$ be test projectors defined according to   \eref{eq:Rer}. Then 
	\begin{align}
	&\tr[(R_{e,\vec{r}}-Q_e)(R_{e,\vec{s}}-Q_e)]
	=\tr(\tilde{R}_{e,\vec{r}} \tilde{R}_{e,\vec{s}})\nonumber\\	
	&=2S_e-3+2\biggl(\frac{1+\vec{r}\cdot\vec{s}}{2}\biggr)^{2S_e}+2\biggl(\frac{1-\vec{r}\cdot\vec{s}}{2}\biggr)^{2S_e}\nonumber\\
	&=2S_e-3+2^{2-2S_e}\sum_{j=0}^{\lfloor S_e\rfloor}\binom{2S_e}{2j}(\vec{r}\cdot \vec{s})^{2j},  \label{eq:ErEsOverlap}
	\end{align}
	where $\tilde{R}_{e,\vec{r}}=P_e-|+\>_{e,\bm{r}}\<+|-|-\>_{e,\bm{r}}\<-|$.
\end{lem}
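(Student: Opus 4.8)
The plan is to reduce the statement to a direct trace expansion combined with the standard overlap formula for spin coherent states. First I would record the algebraic identity $R_{e,\vec{r}}-Q_e=\tilde{R}_{e,\vec{r}}$, which is immediate from $Q_e=1-P_e$ and the definition \eref{eq:Rer}; this already gives the first equality in \eref{eq:ErEsOverlap}. The essential structural input I would establish next is that the stretched states $|\pm\>_{e,\vec{r}}=|\pm\>_{j,\vec{r}}\otimes|\pm\>_{k,\vec{r}}$ are exactly the highest- and lowest-weight states of the maximal total spin $S_e=S_j+S_k$ along the axis $\vec{r}$, and therefore lie in the range of $P_e$. In particular $P_e|\pm\>_{e,\vec{r}}=|\pm\>_{e,\vec{r}}$, which is what makes all the $P_e$-containing cross terms collapse.

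Next I would expand $\tilde{R}_{e,\vec{r}}\tilde{R}_{e,\vec{s}}$ into its nine terms and take the trace of each. Using $\tr(P_e)=2S_e+1$ together with the fact that every rank-one projector $|\pm\>_{e,\vec{r}}\<\pm|$ sits inside $P_e$ (so that $\tr(P_e|\pm\>_{e,\vec{s}}\<\pm|)=1$ and likewise on the left), the five terms carrying a factor of $P_e$ contribute $2S_e+1-4=2S_e-3$, while the remaining four terms reduce to the coherent-state overlaps between $|\pm\>_{e,\vec{r}}$ and $|\pm\>_{e,\vec{s}}$.

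The crux is evaluating these four overlaps. I would invoke the SU(2) coherent-state formula $|\<\vec{r}|\vec{s}\>|^2=\bigl(\tfrac{1+\vec{r}\cdot\vec{s}}{2}\bigr)^{2S_e}$, which holds because the spin-$S_e$ highest-weight state is the $(2S_e)$-fold symmetric power of the qubit coherent state. Since $|-\>_{e,\vec{s}}$ is the highest-weight state along $-\vec{s}$, the two aligned overlaps ($++$ and $--$) each equal $\bigl(\tfrac{1+\vec{r}\cdot\vec{s}}{2}\bigr)^{2S_e}$ while the two anti-aligned overlaps ($+-$ and $-+$) each equal $\bigl(\tfrac{1-\vec{r}\cdot\vec{s}}{2}\bigr)^{2S_e}$, which yields the second line of \eref{eq:ErEsOverlap}. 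The third line then follows by writing $u=\vec{r}\cdot\vec{s}$, factoring out $2^{-2S_e}$, and applying the binomial identity $(1+u)^{2S_e}+(1-u)^{2S_e}=2\sum_{j=0}^{\lfloor S_e\rfloor}\binom{2S_e}{2j}u^{2j}$, in which only even powers survive. I expect the only genuinely nontrivial points to be the identification of $|\pm\>_{e,\vec{r}}$ as the stretched states of $P_e$ and the justification of the coherent-state overlap formula; once these are in hand the rest is routine bookkeeping.
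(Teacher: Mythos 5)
Your proposal is correct and follows essentially the same route as the paper's proof: establish that the stretched states $|\pm\>_{e,\vec{r}}$ lie in the range of $P_e$ so that $R_{e,\vec{r}}-Q_e=\tilde{R}_{e,\vec{r}}$, expand the trace of the product, evaluate the four cross terms via the standard spin-coherent-state overlap formula, and finish with the binomial identity. The only cosmetic difference is that you invoke the overlap formula directly for the total spin $S_e$ (via the symmetric-power picture), whereas the paper states it for each spin $S_j$, $S_k$ separately and multiplies the two factors; these are equivalent.
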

With  \lref{lem:ErEsOverlap} it is easy to compute the trace $\tr(R_{e,\vec{r}} R_{e,\vec{s}})$ since 
\begin{align}
\tr(R_{e,\vec{r}} R_{e,\vec{s}})&=\tr[(R_{e,\vec{r}}-Q_e)(R_{e,\vec{s}}-Q_e)]\nonumber\\
&\quad +\tr(Q_e). 
\end{align}

\begin{proof}[Proof of \lref{lem:ErEsOverlap}]
	By definitions in \esref{eq:pmer}{eq:Rer}, the kets  $|\pm\>_{e,\bm{r}}$ for any unit vector $\vec{r}$ in dimension 3 belong to the support of $P_e$, so $R_{e,\vec{r}}$  commutes with $P_e$ and $Q_e$. In addition,
	\begin{align}
	&R_{e,\vec{r}}-Q_e=P_eR_{e,\vec{r}}P_e =P_e R_{e,\vec{r}}=R_{e,\vec{r}}P_e\nonumber\\
	&=P_e-|+\>_{e,\bm{r}}\<+|-|-\>_{e,\bm{r}}\<-|=\tilde{R}_{e,\vec{r}}. \label{eq:ErEsOverlapProof1}
	\end{align}
	Similar conclusions also hold if $\vec{r}$ is replaced by~$\vec{s}$.

	According to the theory of spin (or atomic) coherent states  (see Sec. III D in \rcite{ZhanFG90}), we have 
	\begin{equation}
	\begin{aligned}
	&|{}_{j,\bm{r}}\<+|+\>_{j,\bm{s}}|^2 =|{}_{j,\bm{r}}\<-|-\>_{j,\bm{s}}|^2=\biggl(\frac{1+\vec{r}\cdot\vec{s}}{2}\biggr)^{2S_j}, \\
	&|{}_{j,\bm{r}}\<+|-\>_{j,\bm{s}}|^2 =|{}_{j,\bm{r}}\<-|+\>_{j,\bm{s}}|^2=\biggl(\frac{1-\vec{r}\cdot\vec{s}}{2}\biggr)^{2S_j}, 
	\end{aligned}
	\end{equation}
	which implies that
	\begin{equation}
	\begin{aligned}
	&|{}_{e,\bm{r}}\<+|+\>_{e,\bm{s}}|^2 =|{}_{e,\bm{r}}\<-|-\>_{e,\bm{s}}|^2=\biggl(\frac{1+\vec{r}\cdot\vec{s}}{2}\biggr)^{2S_e}, \\
	&|{}_{e,\bm{r}}\<+|-\>_{e,\bm{s}}|^2 =|{}_{e,\bm{r}}\<-|+\>_{e,\bm{s}}|^2=\biggl(\frac{1-\vec{r}\cdot\vec{s}}{2}\biggr)^{2S_e}. \label{eq:ErEsOverlapProof2}
	\end{aligned}
	\end{equation}

	\Esref{eq:ErEsOverlapProof1}{eq:ErEsOverlapProof2} together imply that
	\begin{align}
	&\tr[(R_{e,\vec{r}}-Q_e)(R_{e,\vec{s}}-Q_e)]
	=\tr(\tilde{R}_{e,\vec{r}} \tilde{R}_{e,\vec{s}})\nonumber\\
	&=\tr(P_e)-4+2|{}_{e,\bm{r}}\<+|+\>_{e,\bm{s}}|^2+2|{}_{e,\bm{r}}\<+|-\>_{e,\bm{s}}|^2\nonumber\\
	&=2S_e-3
	+2\biggl(\frac{1+\vec{r}\cdot\vec{s}}{2}\biggr)^{2S_e}+2\biggl(\frac{1-\vec{r}\cdot\vec{s}}{2}\biggr)^{2S_e}\nonumber\\
	&=2S_e-3+2^{2-2S_e}\sum_{j=0}^{\lfloor S_e\rfloor}\binom{2S_e}{2j}(\vec{r}\cdot \vec{s})^{2j},
	\end{align}	
	which confirms \eref{eq:ErEsOverlap} and completes the proof of \lref{lem:ErEsOverlap}. 
\end{proof}

\section{\label{app:Companion}Connections with and distinctions from the companion paper \cite{ChenLZ23}}
In this paper, we  proposed a general  recipe for verifying the ground states of frustration-free Hamiltonians based on local measurements and provided  rigorous bounds on the sample cost required to achieve a given precision as formulated in  \thsref{thm:SpectralGap} and \ref{thm:SpectralGap2}. To illustrate the power of this general recipe, we  constructed  efficient  protocols for verifying AKLT states based on local spin measurements.

In the companion paper \cite{ChenLZ23}, we discussed in more details the verification of AKLT states following the general recipe proposed here and rigorous bounds on the sample cost  as presented in  \thsref{thm:SpectralGap} and \ref{thm:SpectralGap2}. To be specific, we proved that the tests of the form in \eref{eq:Rer} are optimal among tests based on local spin measurements. In addition, the main properties of these tests and corresponding bond verification operators can actually be understood from the perspective of one spin with a suitable spin value instead of two spins, which is very helpful to technical analysis and numerical calculation. Notably, Theorem~3 in \rcite{ChenLZ23}, which is the counterpart of \thref{thm:homoDesignGen} in this paper, is formulated in this perspective. Based on this observation we constructed many  concrete bond verification  protocols based on platonic solids and other special distributions on the unit sphere, which are optimal or nearly optimal. Then  we discussed the optimization of the spectral gaps based on the optimization of matching covers and test probabilities.  Furthermore, we constructed a number of concrete verification protocols for AKLT states on open and closed chains and compared their efficiencies based on theoretical analysis and numerical calculation. Finally, we considered AKLT states on arbitrary graphs up to five vertices.

\section{\label{app:Comparison}Comparison with previous works}
In this section we compare our verification protocols for the ground states of frustration-free local Hamiltonians  with previous works \cite{TakeM18,CramPFS10,LanyMHB17,HangKSE17,BermHSR18,CruzBTS22,GluzKEA18}. It should be noted that the protocols proposed in \rscite{TakeM18,CramPFS10,LanyMHB17,BermHSR18,GluzKEA18} are also applicable to certain local Hamiltonians that are not frustration free. The property of being frustration-free is crucial to 
attaining the high efficiency  achieved in this work.

According to the following analysis, previous   protocols  require at least $O(n^2 (\ln\delta^{-1})/(\gamma^2\epsilon^2))$ tests  to verify an $n$-qubit target state within infidelity $\epsilon$ and significance level $\delta$, where $\gamma$ is the spectral gap of the underlying Hamiltonian. In sharp  contrast, our protocols require only $O((\ln\delta^{-1})/(\gamma\epsilon))$ tests, which is substantially more efficient than previous protocols. Notably, only our protocols  can verify the target state with a sample cost that is independent of the system size when  the Hamiltonian is gapped in the thermodynamic limit). Moreover, for large and intermediate quantum systems of practical interest (which are beyond classical simulation and are required to demonstrate quantum advantage), only our protocols can achieve the verification task with 
reasonable sample cost acceptable
in experiments or practical applications. When $\epsilon=\delta=0.01$ and $n$ is around 100 for example, other  protocols  would require tens of thousands of times more samples to achieve the same precision.

\subsection{Comparison with \rcite{TakeM18}}
In \rcite{TakeM18}, Takeuchi and  Morimae (TM) introduced a protocol for verifying the ground states of local Hamiltonians. To verify an $n$-qubit target state within infidelity $\epsilon=1/n$ and significance level $\delta=1/n$, the number of tests required is given by $k+m$ with
\begin{align}
k&\geq 32R^2 n^5, \quad
m\geq 2n^5k^2\log 2\geq  2^{11} n^{15}R^4\log2,
\end{align}
where $R=\poly(n)/\gamma$, and $\gamma=\gamma(H)$ is the spectral gap of the underlying Hamiltonian $H$. This number
is approximately proportional to the fourth power of the inverse spectral gap. The scaling behaviors with $\epsilon$ and $\delta$ are not clear because the choices of these parameters in  \rcite{TakeM18} are coupled with the number $n$ of qubits. In any case, the number of required tests is astronomical for any verification task of practical interest. When $n=100$ for example, this number is at least $10^{33}R^4$, which is billions of times more than what is required in our protocols and is  too prohibitive for practical applications.

Incidentally, the TM protocol can be applied to  the adversarial scenario in which the preparation device is not trusted. 
By virtue of the recipe proposed in \rscite{ZhuH19AdS,ZhuH19AdL}, our protocols can also be generalized to the adversarial scenario with negligible overhead in the sample cost. In this scenario, our protocols are still dramatically more efficient than the TM protocol.

\subsection{Comparison with \rscite{CramPFS10,LanyMHB17,HangKSE17,BermHSR18,CruzBTS22}}
In  \rcite{CramPFS10}, Cramer et al. introduced an approach for verifying quantum states that can be  approximated by matrix product states (MPS). It is much more efficient than quantum state tomography, but the paper did not give very specific sample cost (see the followup \rcite{LanyMHB17} for a bit more detail). In addition, this approach
only applies to one-dimensional systems, which is a bit restricted.

In \rcite{HangKSE17}, Hangleiter, Kliesch, Schwarz, and Eisert (HKSE) extended the approach introduced in \rcite{CramPFS10} and proposed  a general protocol for verifying the ground states of local Hamiltonians, assuming that each local projector can be measured directly.
To verify the  target state within infidelity $\epsilon$ and significance level $\delta$, the number of tests required is given by 
\begin{align}
\frac{|E|^3}{2\gamma^2 \epsilon^2}\ln\biggl[-\frac{|E|+1}{\ln(1-\delta)}\biggr]\approx \frac{|E|^3}{2\gamma^2 \epsilon^2}\ln\frac{|E|}{\delta}, \label{eq:HKSE}
\end{align}
where $|E|$ is the number of edges of the graph $G(V,E)$ encoding the action of the Hamiltonian, that is, the number of local projectors. Here the approximation is applicable  when $|E|\gg 1$ and $\delta\ll 1$, which hold for most cases of practical interest. If the Hamiltonian is 2-local and is defined on a lattice with $n$ nodes and coordination number $z$, then $|E|=zn/2$, so the above equation reduces to 
\begin{align}
\frac{z^3n^3}{16\gamma^2 \epsilon^2}\ln\frac{zn}{2\delta}.
\end{align}
This number is (approximately) proportional to
$n^3$, $\gamma^{-2}$, and $\epsilon^{-2}$. The sample complexity is much better than the  TM protocol \cite{TakeM18} discussed above. However, this number is still too prohibitive for large and intermediate quantum systems of practical interest, which consist of more than 100 qubits or qudits.

In \rcite{BermHSR18}, Bermejo-Vega,  Hangleiter, Schwarz, Raussendorf, and Eisert (BHSRE) improved the HKSE protocol and reduced the sample cost to [see Eq.~(E10) in the paper]
\begin{align}
\frac{n^2\alpha^2\kappa^2}{2\gamma^2 \epsilon^2}\ln\frac{\kappa+1}{\delta}\geq \frac{n^2}{2\gamma^2 \epsilon^2}\ln\frac{\kappa+1}{\delta}, \label{eq:BHSRElb}
\end{align}
where $\alpha,\kappa$ depend on certain decomposition of the underlying Hamiltonian and satisfy the conditions $\kappa\geq 2$ and  $\alpha\kappa\geq 1$. Here the scaling behavior in $n$ is better than the HKSE protocol, but the scaling behaviors in $\gamma$ and $\epsilon$ remain the same. 

Following the idea of the HKSE protocol, Ref.~\cite{CruzBTS22} introduced a protocol for verifying a family of tensor network states. The sample cost is proportional to $|E|^2/(\gamma^2 \epsilon^2)$, which is comparable to the BHSRE protocol. However, the proof in Ref.~\cite{CruzBTS22} relies on the Gaussian approximation, which is not very rigorous. Notably, Eq.~(E2) in Ref.~\cite{CruzBTS22} is applicable only under 
suitable restrictions on the parameters, which were not clarified. Additional analysis is required to derive a rigorous bound on the sample cost.

For example, consider the AKLT state on the closed chain with $n=100$ nodes, in which case  $z=2$ and $\gamma\approx 0.350$ \cite{GarcMW13,WeiRA22}. Suppose we want to verify the target AKLT state within infidelity $\epsilon=0.01$ and significance level $\delta=0.01$. We can apply the optimal coloring protocol, and each bond verification protocol can be  constructed from a 4-design [cf. \thref{thm:homoDesignGen}]. 
According to \thref{thm:SpectralGap} with $m=g=2$, $s=1/2$, $\nu_E=2/5$, and $\gamma\approx 0.350$, the spectral gap of the verification operator is bounded from below by 0.0278, and the number of tests required by our protocol is at most $1.66\times 10^4$. 
By contrast, according to \eref{eq:HKSE} with $|E|=n$, the number of  tests required by the  HKSE protocol is about $3.76\times 10^{11}$, which is 22 million times more than our protocol. According to the lower bound in \eref{eq:BHSRElb} with $\kappa=2$, the number of  tests required by the BHSRE protocol is at least $2.32\times 10^9$ (assuming that BHSRE protocol can be generalized to qudit systems; originally,  BHSRE mainly focused on qubit systems), which is 140 thousand times more than our protocol.

Next, consider the AKLT state on the honeycomb lattice with the same number of nodes, in which case  $z=3$ and $\gamma\approx 0.10$ \cite{GarcMW13,WeiRA22}. Now we can apply the optimal coloring protocol as illustrated in \fref{fig:SquareHoney} in the main text, and each bond verification protocol can be  constructed from a 6-design [cf. \thref{thm:homoDesignGen}]. 
According to \thref{thm:SpectralGap} with $m=3$, $g=4$, $s=1/2$, $\nu_E=2/7$, and $\gamma\approx 0.10$,  the spectral gap of the verification operator is bounded from below by $5.8\times 10^{-4}$, and the number of tests required by our protocol is at most $7.9\times 10^5$. By contrast, according to \eref{eq:HKSE} with $|E|=3n/2$, the number of  tests required by the  HKSE protocol
is about $1.6\times 10^{13}$, which is 20 million times more than our protocol.  According to the lower bound in \eref{eq:BHSRElb} with $\kappa=3$, 
the number of  tests required by the BHSRE protocol is at least  $2.9\times 10^{10}$, which is 37 thousand times more than our protocol. 
The advantage of our verification protocol is more dramatic as the system size  and target precision increase.

In addition, our protocol only uses local projective measurements. If the  projectors that compose the Hamiltonian can be measured directly as required in the HKSE protocol, then the bond spectral gap $\nu_E$ in \thref{thm:SpectralGap} can attain the maximum value 1 instead of $2/5$ ($2/7$) for the 1D chain (honeycomb lattice), and the number of  tests required by our protocol can be reduced by a factor of  $2/5$ ($2/7$) for the 1D chain (honeycomb lattice).\\

\subsection{Comparison with \rcite{GluzKEA18}}
In \rcite{GluzKEA18},  Gluza, Kliesch, Eisert, and  Aolita (GKEA) introduced a protocol for verifying fermionic Gaussian states. The focus and scope of applications are very different from the current work. To verify an $L$-mode fermionic Gaussian state within infidelity $\epsilon$ and significance level $\delta$, the sample complexity is about 
\begin{align}
N\leq \biggl\lceil \frac{2L^4 \ln(2/\delta)}{\epsilon^2}\biggr\rceil. 
\end{align}
When the fermionic Gaussian state is the unique ground state of a gapped local Hamiltonian, the sample complexity can be reduced to 
\begin{align}
N\sim \biggl\lceil \frac{L^2(\ln L)^2 \ln(2/\delta)}{2\epsilon^2}\biggr\rceil. 
\end{align}
However, here the constant and scaling behavior with respect to the spectral gap $\gamma$ were not clarified in \rcite{GluzKEA18}. 


\end{document}